\documentclass[11pt]{article}
\usepackage[left=1in,top=1in,right=1in,bottom=1in,head=0in,nofoot]{geometry}
\usepackage{setspace,url,bm,amsmath}
\usepackage{scalerel}
\usepackage{xcolor}

\usepackage{multirow}
\usepackage{arydshln}
\usepackage{float}
\usepackage{booktabs}
\usepackage{lscape}

\usepackage{mathtools}
\usepackage{amsmath}
\usepackage{amssymb}
\usepackage{amsthm}
\usepackage{bbm}
\usepackage{latexsym}
\theoremstyle{definition}
\newtheorem{assumption}{Assumption}
\newtheorem*{theorem*}{Theorem}
\newtheorem{theorem}{Theorem}
\newtheorem*{rmk*}{remark}
\newtheorem{proposition}{Proposition}
\newtheorem{lemma}{Lemma}

\newtheorem{corollary}{Corollary}
\newtheorem*{corollary*}{Corollary}

\usepackage{graphicx}
\usepackage{caption}
\usepackage[margin=20pt]{subcaption}
\graphicspath{ {./figs/} }

\usepackage{enumerate}
\usepackage{titlesec}
\titlelabel{\thetitle.\quad}
\titleformat*{\section}{\bf\Large\center}

\usepackage{authblk}

\usepackage{natbib} 
\bibpunct{(}{)}{;}{a}{}{,}

\usepackage{multibib}
\newcites{sec}{References}

\usepackage{listings}
\usepackage{hyperref}
\usepackage{comment}

\def \intd{\mathrm{~d}}
\newcommand{\indep}{\perp \!\!\! \perp}

\usepackage{etoolbox}
\apptocmd{\sloppy}{\hbadness 10000\relax}{}{}
\RequirePackage[normalem]{ulem}
\allowdisplaybreaks
\usepackage{makecell}
\usepackage{steinmetz}
\usepackage{colortbl}

\begin{document}
 
\singlespacing

\title{\bf Probabilities of Causation for Continuous Outcomes: Bounds and Identification} 

\author[1]{Jile Chaoge}
\author[1]{Kesen Han}
\author[1]{Fahui Liu\thanks{Corresponding author: 2451051003@st.btbu.edu.cn}}
\author[1]{Peng Wu}
\affil[1]{\small School of Mathematics and Statistics, Beijing Technology and Business University, 100048, China}

\date{}

\maketitle

\begin{abstract}
The probability of necessity (PN), which quantifies the probability that an observed event would not have occurred in the absence of the treatment, is a central estimand in attribution analysis. While PN has been extensively studied for binary outcomes and has recently been developed for ordinal outcomes, a formal framework for continuous outcomes remains underdeveloped. To address this gap, we propose the general probability of necessity (GPN) for continuous outcomes, a setting that is substantially more challenging than the binary and ordinal cases. Rather than imposing strong identifiability assumptions, we adopt a partial identification perspective and derive sharp lower and upper bounds under standard assumptions of ignorability and monotonicity. We further introduce a copula-based framework that exploits dependence information between potential outcomes to tighten these bounds. Simulation studies and real-world applications demonstrate the effectiveness of our method. 
\end{abstract}

\medskip 
\noindent 
{\bf Keywords}: 
Causal Attribution; Probability of Causation; Copula Models; Causal Inference

\newpage
 
\onehalfspacing

\section{Introduction}
\label{intro}

Much of the causal inference literature has focused on forward-looking questions, such as estimating the average treatment effect 
on a population. 
In many high-stakes domains, however, the central questions are backward-looking, or attributional~\cite{Pearl2015, Dawid2022}.   
For example, in legal settings, courts must determine whether a specific plaintiff's injury was caused by the defendant's action, rather than  assessing general risks~\cite{dawid2014fitting, sanders2021differential}. 
Similarly, in personalized medicine, it is often necessary to assess whether a patient's recovery can be attributed to a treatment or whether it would have occurred in its absence~\cite{khoury2004epidemiologic}. 
The primary estimands for such questions are the  probability of necessity (PN) and its analogues~\cite{Pearl1999}.

While PN is well studied for binary outcomes and has recently been generalized to ordinal settings~\cite{Zhang2024}, a formal formulation for continuous outcomes--such as blood pressure, tumor size, or financial loss--remains underdeveloped. In practice, discretizing continuous outcomes leads to information loss, motivating the need for an attribution framework that operates on continuous outcomes.

The main challenge in identifying the PN for continuous outcomes arises from the unobservability of the joint distribution of the potential outcomes $(Y(1), Y(0))$. Since these two outcomes are never observed simultaneously for any individual, their dependence structure is not identifiable from the observed data, even under randomization~\cite{Holland1986}. In binary and ordinal settings, this difficulty is typically addressed by assuming monotonicity ($Y(1)\geq Y(0)$) for binary outcomes and  monotonic incremental treatment effect ($0\leq Y(1) - Y(0)\leq 1$) for ordinal outcomes, achieving point identification~\cite{Pearl1999, Zhang2024}.   
In the continuous setting, however, monotonicity conditions are no longer sufficient. This is because monotonicity restricts the direction of the treatment effect but does not determine the specific value of $Y(1)$ given $Y(0)$ (or $Y(0)$ given $Y(1)$), providing only limited information. Consequently, relying on these assumptions often yields wide bounds.

In this paper, we consider a new estimand, the general probability of necessity (GPN), for continuous outcomes. The formulation of the GPN provides a unified framework for defining PN across various types of outcomes. 
Rather than imposing additional restrictive assumptions to achieve point identification, we adopt a partial identification approach that derives lower and upper bounds for the GPN.

We first derive bounds under the standard ignorability assumption~\cite{rosenbaum1983central}, as well as under the combined ignorability and monotonicity assumptions. The former corresponds to the classic Fréchet-–Hoeffding bounds. These bounds provide a basis for comparison and encompass several previous results for binary outcomes.
Subsequently, given that the monotonicity is often  strong in practice and provides limited information about the GPN, we propose a novel method to improve the Fréchet-–Hoeffding bounds based on copula models~\cite{copulas2006}. By Sklar's Theorem~\cite{sklar1959fonctions}, the joint distribution of $(Y(1), Y(0))$ can be expressed in terms of its marginal distributions and a copula capturing their dependence. Accordingly, we  reformulate the GPN in a copula representation.

The copula representation provides a key insight into how to improve bounds on the GPN. The rationale is that, under the copula representation, the GPN depends only on an association parameter $\rho$ that captures the dependence between the potential outcomes under the ignorability assumption. When no information about $\rho$ is available, the resulting bounds reduce to the Fréchet--Hoeffding bounds. However, any domain knowledge that restricts $\rho$ to a subset of its admissible range serves to tighten these bounds.

Notably, the association parameter $\rho$ is not identifiable under the ignorability assumption alone, but its plausible range may be informed by domain knowledge; see~\cite{Wu-etal-2024-Harm, Bodik2025} for further discussion. Moreover, even when specifying a precise interval for $\rho$ is difficult due to limited expert knowledge, determining the sign of the dependence is often feasible. A simple and practically appealing restriction is to assume positive dependence (i.e., $\rho \geq 0$), which yields significantly tighter  bounds on GPN. 
Moreover, the copula representation naturally establishes a sensitivity analysis framework for $\textup{GPN}$ by treating $\rho$ as the sensitivity parameter, enabling the assessment of the robustness of the resulting conclusions.

The main contributions are summarized as follows: 
\begin{itemize}
	\item  We propose a new estimand for attribution analysis for continuous outcomes: the general probability of necessity (GPN), which extends and encompasses previous estimands for binary and ordinal outcomes.

	\item  We derive identification bounds for the GPN under the standard ignorability assumption, as well as under the combined ignorability and monotonicity assumptions.

	\item  We further develop a framework for improving the bounds on the GPN based on copula models. This framework not only facilitates the incorporation of expert knowledge to improve bounds but also provides a sensitivity analysis method to evaluate the robustness of the bounds as the association parameter varies.

	\item  We illustrate the proposed bounds through simulations and real-world applications, demonstrating the effectiveness of the proposed method. 
\end{itemize}

\section{Related Work}
\label{related}

\textbf{Probabilities of Causation.} 
Causal attribution studies ``causes of effects'', in contrast to the estimation of ``effects of causes'' (e.g., ATE)~\cite{Dawid2022}. 
It has been applied in medicine, law, policy analysis, and invariant learning~\cite{Pearl2015, dawid2014fitting, sanders2021differential, khoury2004epidemiologic, Yang-etal2023-NIPS, Sun-etal2025-ICML}.~\cite{Pearl1999} formalized attribution within structural causal models by defining the probability of necessity (PN) and the probability of sufficiency (PS). 
Early work focused on binary outcomes without covariates, deriving identification results and sharp bounds under standard assumptions. 
Under ignorability, sharp bounds on PN were obtained, whereas the addition of monotonicity led to point identification~\cite{Pearl1999, Tian2000}. Subsequent studies incorporated covariates and mediators to further tighten these bounds~\cite{Dawid2017, Mueller2021, zhang2025confidence}.

In recent years, extensions to settings with non-binary treatment or non-binary outcomes have been considered~\cite{Li2022learning, Zhao2023conditional, Li2024probabilities}.
Regarding non-binary outcomes,~\cite{Zhang2024} extended PN and PS to ordinal outcomes and achieved identification under a monotonic incremental treatment effect assumption. 
In contrast, a general framework for continuous outcomes remains limited, despite their prevalence in scientific and economic applications. 
This work addresses this gap by defining and discussing the GPN for continuous outcomes.

\textbf{Partial Identification and Sensitivity Analysis.} 
A central difficulty in attribution analysis is that the joint distribution of potential outcomes is not identified from observational data~\cite{Holland1986}. 
As a result, point identification typically requires strong assumptions, such as monotonicity~\cite{Pearl1999} or conditional independence~\cite{Shen2013treatment}. 
Beyond identification,~\cite{Tian2025semiparametric} studied statistical efficiency and proposed semiparametric efficient estimators for PN and PS under these assumptions.
However, when such assumptions are not credible, partial identification provides an alternative by characterizing sets of compatible values. Recent work has adopted sensitivity analysis and bounding approaches that relax restrictive assumptions. 
For example,~\cite{Bodik2025} used copulas to derive prediction intervals for individual treatment effects. 
In this setting, we introduce a copula-based framework for GPN that allows the dependence structure between potential outcomes to be restricted using domain knowledge (e.g., positive dependence). 
This approach tightens the Fréchet--Hoeffding bounds without imposing monotonicity.

\section{Preliminaries}
\label{setup}

\subsection{Notation}
Let \( Z \in \{0,1\} \) denote a binary treatment indicator, \( X \in \mathcal{X} \subseteq \mathbb{R}^d \) a vector of pre-treatment covariates, and \( Y \in \mathbb{R} \) the observed outcome. We adopt the potential outcomes framework~\cite{Neyman1990, Rubin1974} to describe causal estimands. Specifically, for each unit, let \( Y(1) \) and \( Y(0) \) denote the potential outcomes that would be observed under treatment (\( Z=1 \)) and control (\( Z=0 \)), respectively. 
Under the stable unit treatment value assumption (SUTVA), the observed outcome  
$Y = Z\,Y(1) + (1-Z)\,Y(0).$
Suppose that the observed data \( \mathcal{D} = \{(X_i, Z_i, Y_i)\}_{i=1}^n \) consist of \( n \) independent and identically distributed (i.i.d.) observations drawn from a superpopulation \( \mathbb{P} \). Let \( \mathbb{E}[\cdot] \) denote the expectation with respect to \( \mathbb{P} \).

\subsection{Probabilities of Causation}
In causal inference, the goal is not only to quantify the effects of specific interventions but also to uncover the causes underlying observed outcomes~\cite{Dawid2022}; the latter is typically addressed through attribution analysis~\cite{pearl2009causality, Pearl-etal2016-primer}.
In the context of attribution analysis, much of the literature focuses on binary outcomes, for which two widely used causal quantities are the probability of necessity (PN) and the probability of sufficiency (PS), defined as
\begin{equation}  \label{eq1}
	\begin{split}
		\text{PN} ={}&  \mathbb{P}(Y(0)=0 \mid Z=1, Y=1), \\
		\text{PS} ={}& \mathbb{P}(Y(1)=1 \mid Z=0, Y=0). 
	\end{split}
\end{equation} 
Intuitively, PN evaluates whether the treatment was a necessary factor for the observed event ($Y=1$); it represents the probability that the event would not have occurred had the treatment been absent, given that the event occurred under treatment. In contrast, PS captures the sufficiency of the treatment, quantifying its capacity to produce the event in situations where it would not have occurred under control.

Recently,~\cite{Zhang2024} extended PN and PS from binary to ordinal outcomes. For example, PN was defined as $\mathbb{P}(Y(0) < y \mid Z=1, Y=y)$.
In this paper, we extend prior work to accommodate continuous outcomes. In many application domains, such as clinical trials (e.g., tumor size and blood pressure) and economics (e.g., revenue and customer lifetime value), outcomes are inherently continuous. Discretizing these variables inevitably incurs information loss and introduces sensitivity to binning choices.

\section{Conceptual Framework}
\label{concept}

In this section, we present two metrics for attribution analysis with continuous outcomes, relate them to existing formulations for binary and ordinal outcomes, and highlight the  challenges posed by the continuous setting.

For continuous outcomes, we define the general probability of necessity (GPN) and the general probability of sufficiency (GPS) as follows:  
\begin{equation}   \label{eq2}
	\begin{aligned}
		\textup{GPN} &= \mathbb{P}(Y(0) < c_0 \mid Z=1, Y \ge c_1),\\
		\textup{GPS} &= \mathbb{P}(Y(1) \ge c_1 \mid Z=0, Y < c_0),
	\end{aligned}
\end{equation}
where $c_0$ and $c_1$ are prespecified thresholds for potential outcomes $Y(0)$ and $Y(1)$, respectively. We set $c_1 \geq c_0$ for ease of interpretation, although this is not mathematically required. 
The values of $c_0$ and $c_1$ are used to define the event in which the treatment $Z$ exerts the minimal effect required to constitute an effective cause.  
To avoid redundancy, we focus on GPN in the following discussion, as the analysis for GPS can be conducted in a similar manner.

A more flexible definition of the GPN is $\mathbb{P}( \underline{c_0} \leq Y(0) < \overline{c_0} \mid Z=1, \underline{c_1} \leq Y < \overline{c_1})$, where $(\underline{c_0}, \overline{c_0}, \underline{c_1}, \overline{c_1})$ are prespecified constants. However, we show that this formulation can be expressed as a weighted linear combination of the GPNs defined in \eqref{eq2}; see Appendix~\ref{app:interval_gpn} for details. Consequently, our discussion of GPN subsumes this estimand.

{\bf Connection to Previous Definitions}. GPN provides a unified way for defining PN with binary and ordinal outcomes. Specifically, 
(a) binary outcomes: for $Y \in \{0, 1\}$ with thresholds $c_0=c_1=1$, GPN reduces to the PN in \eqref{eq1}.   
(b) Ordinal Outcomes: For $Y \in {1, \ldots, K}$,~\cite{Zhang2024} defined the PNS for a specific outcome level as $\mathbb{P}(Y(0) < y \mid Z = 1, Y = y)$. When a restriction is imposed (e.g., $y \le Y < y + \delta$ for a small $\delta$), the flexible definition of the GPN naturally encompasses this estimand.

{\bf Challenges.} Two main challenges arise in identifying GPN.

(a) \emph{Joint distribution of potential outcomes}. The definition of GPN depends on the joint distribution of potential outcomes. Since the joint values of potential outcomes are never observed simultaneously for each individual, the estimand is inherently unidentifiable from the observed data, even under randomization. 

(b) \emph{Standard conditions for binary and ordinal outcomes are insufficient}. For identification, beyond the standard ignorability assumption~(Assumption~\ref{asp1}), common  conditions for discrete outcomes are inadequate for GPN.  Specifically, assumptions such as monotonicity (for binary outcomes, Assumption~\ref{asp2}) and the monotonic incremental treatment effect (for ordinal outcomes)~\cite{Zhang2024} do not yield point identification for continuous outcomes. 

Consequently, rather than imposing additional restrictive and untestable assumptions to force point identification, we adopt a partial identification approach that focuses on deriving sharp bounds on the GPN. 
The sharp bounds represent the tightest bounds attainable for an estimand under the observed data distribution and maintained assumptions~\cite{Fan-Park-2010, Rosenbaum2020, Ding2023}.

\section{Sharp Bounds under Common Conditions}
\label{sec:common}
In this section, we explore the sharp bounds on the conditional GPN,  defined as
\begin{equation} 
	\label{eq:gpn_target}
	\begin{split}
		\textup{GPN}&(x) ={}  \mathbb{P}(Y(0) < c_0 \mid X=x, Z=1, Y \ge c_1) \\
		={}& \frac{\mathbb{P}(Y(0) < c_0, Y(1) \ge c_1 \mid X=x, Z=1)}{\mathbb{P}(Y \ge c_1 \mid X=x, Z=1)}. 
	\end{split}
\end{equation}
We derive these bounds under the standard ignorability assumption and under the combined ``ignorability + monotonicity'' assumptions, respectively.

Analyzing the sharp bounds on $\text{GPN}(x)$ is more general than studying GPN directly, since we can always obtain the sharp bounds on GPN by taking the expectation of the sharp bounds on $\text{GPN}(X)$  with respect to $X$. Throughout, we maintain the standard ignorability assumption.

\begin{assumption}[Ignorability]\label{asp1} \, \\
	(a) Unconfoundedness: $(Y(1), Y(0)) \indep Z \mid X$;\\  
	(b) Overlap: $0 <  \mathbb{P}(Z=1 \mid X=x) < 1$ for all $x \in \mathcal{X}$. 
\end{assumption}

The unconfoundedness states that, conditional on covariates, treatment assignment is independent of the potential outcomes, while the overlap requires that every individual has a positive probability of receiving each treatment. Under Assumption~\ref{asp1}, the conditional marginal cumulative distribution functions (CDFs)
\( F_{Y(1)|X}(y|x) = \mathbb{P}(Y(1) \le y \mid X = x) \) and
\( F_{Y(0)|X}(y|x) = \mathbb{P}(Y(0) \le y \mid X = x) \)
are identifiable from the observed data. However, Assumption~\ref{asp1} alone is insufficient to identify \( \text{GPN}(x) \), since it depends on the joint distribution of \( (Y(1), Y(0)) \), which is not identifiable.

\subsection{Sharp Bounds under Ignorability}
\label{sec:general_bounds}
We first derive sharp bounds on $\text{GPN}(x)$ that rely solely on the ignorability assumption. These bounds build upon the Fréchet--Hoeffding  inequalities~\cite{Frechet1960}. 

\begin{lemma}[Fréchet--Hoeffding Bounds]
	\label{thm:fh_bounds}
	Under Assumption \ref{asp1}, $\textup{GPN}(x)$ lies within the interval $[L_{\textup{FH}}(x), U_{\textup{FH}}(x)]$, where 
	\begin{align*}
		& L_{\textup{FH}}(x)={} \max
		\left(0,  1 -\frac{ \mathbb{P}(Y \ge c_0 \mid X=x, Z=0)}{\mathbb{P}(Y \ge c_1 \mid X=x, Z=1)}
		\right), \\
		&U_{\textup{FH}}(x) = \min\left(1, \frac{\mathbb{P}(Y < c_0 \mid X=x, Z=0)}{\mathbb{P}(Y \ge c_1 \mid X=x, Z=1)}\right).
	\end{align*}
	These bounds are sharp. 
\end{lemma}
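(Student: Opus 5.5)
We would first use Assumption~\ref{asp1} to rewrite the numerator of \eqref{eq:gpn_target} as a joint probability of potential outcomes given $X=x$ alone. Unconfoundedness gives
\[
\mathbb{P}(Y(0)<c_0, Y(1)\ge c_1 \mid X=x, Z=1) = \mathbb{P}(A\cap B \mid X=x),
\]
where $A=\{Y(0)<c_0\}$ and $B=\{Y(1)\ge c_1\}$. By SUTVA and overlap, the marginals are identified:
\[
p_A(x):=\mathbb{P}(A\mid X=x)=\mathbb{P}(Y<c_0\mid X=x,Z=0), \qquad p_B(x):=\mathbb{P}(B\mid X=x)=\mathbb{P}(Y\ge c_1\mid X=x,Z=1).
\]
The denominator is exactly $p_B(x)$. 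We implicitly assume $p_B(x)>0$ so that $\textup{GPN}(x)$ is defined. The problem then reduces to bounding $\mathbb{P}(A\cap B\mid X=x)$ given only these two marginal probabilities.

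For validity we apply the Fréchet inequalities for two events:
\[
\max\bigl(0, p_A+p_B-1\bigr)\le \mathbb{P}(A\cap B)\le \min(p_A,p_B).
\]
Dividing by $p_B$ gives the lower bound
\[
\max\left(0, 1-\frac{1-p_A}{p_B}\right).
\]
Since $1-p_A=\mathbb{P}(Y\ge c_0\mid X=x,Z=0)$, this is $L_{\textup{FH}}(x)$. The upper bound is $\min(1, p_A/p_B)$, which is $U_{\textup{FH}}(x)$. This part is routine.

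The main step is sharpness. We must show that for any observed distribution, that is, any conditional marginals $F_{Y(1)|X}(\cdot\mid x)$ and $F_{Y(0)|X}(\cdot\mid x)$, there exist joint laws of $(Y(1),Y(0))$ given $X=x$ that match both marginals, keep $(Y(1),Y(0))\indep Z\mid X$, and attain each endpoint. We would use the comonotone and countermonotone couplings (Fréchet–Hoeffding copulas) written via quantile transforms. For the upper bound, set $U\sim\textup{Unif}(0,1)$ independent of $Z$ given $X$, and put
\[
Y(0)=F^{-1}_{Y(0)|X}(U\mid x), \qquad Y(1)=F^{-1}_{Y(1)|X}(1-U\mid x).
\]
Then $A=\{U<p_A\}$ up to null sets, and $B$ contains $\{U<p_B\}$ up to null sets, so $A\cap B$ has probability $\min(p_A,p_B)$. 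For the lower bound, use $Y(1)=F^{-1}_{Y(1)|X}(U\mid x)$. Now $B=\{U>1-p_B\}$, and the overlap with $\{U<p_A\}$ is $\max(0,p_A+p_B-1)$.

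The delicate point is the handling of boundary events. One must check that $\{Y(0)<c_0\}=\{U<F_{Y(0)|X}(c_0^-\mid x)\}$ and $\{Y(1)\ge c_1\}=\{U\ge F_{Y(1)|X}(c_1^-\mid x)\}$ hold almost surely under quantile transforms, including when there are atoms at the thresholds. We would verify this using the standard equivalence $F^{-1}(u)<c \iff u<F(c^-)$.

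Finally, the observed data law is unchanged by these constructions. Unconfoundedness and overlap continue to hold because the coupling variable $U$ is drawn independently of $Z$ given $X$. Hence both endpoints are attainable, which establishes sharpness.
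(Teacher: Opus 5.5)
Your proposal is correct and follows the paper's proof: apply the Fr\'echet inequalities to $A=\{Y(0)<c_0\}$ and $B=\{Y(1)\ge c_1\}$ given $X=x$, then divide by $\mathbb{P}(B\mid X=x)$. For sharpness, the paper only cites pointwise sharpness of the Fr\'echet--Hoeffding bounds; your comonotone and countermonotone quantile couplings, with the atom handling and the check that the observed law and unconfoundedness are preserved, spell out the argument behind that citation.
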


The Fréchet--Hoeffding bounds in Lemma~\ref{thm:fh_bounds} require fewer 
assumptions, but they are often wide in practice. This is because Assumption~\ref{asp1} imposes no restriction on the relationship between \( Y(1) \) and \( Y(0) \). Consequently, the Fréchet--Hoeffding bounds
represent the widest possible limits, capturing the worst- and best-case scenarios for an arbitrary dependence between \( Y(1) \) and \( Y(0) \).

\subsection{Sharp Bounds under Ignorability and Monotonicity}
For binary outcomes, monotonicity is a standard assumption for identifying PN~\cite{Pearl1999}, but it is insufficient to identify GPN for continuous outcomes. We examine the bounds on $\text{GPN}(x)$ under this additional assumption.

\begin{assumption}[Monotonicity] \label{asp2} 
	$\mathbb{P}(Y(1) \geq Y(0)) = 1$.
\end{assumption}

Monotonicity assumes that the treatment has a non-negative effect for all individuals almost surely. While commonly imposed, this assumption is restrictive and hard to satisfy in settings with substantial individual heterogeneity.

\begin{proposition}[Sharp Bounds under Monotonicity]
	\label{lemma:mono_bounds}
	Under Assumptions \ref{asp1} and \ref{asp2}, 
	$\textup{GPN}(x)$ lies within the interval $[L_{\textup{mono}}(x), U_{\textup{mono}}(x)]$, where 
	\begin{align*}
		&L_{\textup{mono}}(x) = L_{\textup{FH}}(x), \\
		&U_{\textup{mono}}(x)= \min \left( U_{\textup{FH}}(x),    1-\frac{\mathbb{P}(Y \ge c_1 \mid X=x,Z=0)}{\mathbb{P}(Y \ge c_1 \mid X=x, Z=1)}\right ). 
	\end{align*}
	These bounds are sharp.
\end{proposition}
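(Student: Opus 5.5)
The bounds come from the joint-probability representation \eqref{eq:gpn_target}. Under Assumption~\ref{asp1}, the denominator $p_1(x)=\mathbb{P}(Y\ge c_1\mid X=x,Z=1)=\mathbb{P}(Y(1)\ge c_1\mid X=x)$ and both conditional marginals $F_1=F_{Y(1)|X}(\cdot|x)$ and $F_0=F_{Y(0)|X}(\cdot|x)$ are identified. So it suffices to bound the numerator $\pi(x)=\mathbb{P}(Y(0)<c_0,Y(1)\ge c_1\mid X=x)$ over all joint laws with these marginals that satisfy Assumption~\ref{asp2}, and then show that both endpoints are attained. Throughout I will use the paper's convention $c_1\ge c_0$. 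I will also note that Assumption~\ref{asp2} forces first-order stochastic dominance, $F_1\le F_0$, so this is a compatibility condition on the identified marginals.

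\textbf{Validity.} The lower bound $L_{\textup{FH}}(x)$ holds without monotonicity by Lemma~\ref{thm:fh_bounds}, so it remains valid here. For the new upper-bound term, write
\[
\pi(x)\le \mathbb{P}(Y(0)<c_1,\,Y(1)\ge c_1\mid x)=\mathbb{P}(Y(1)\ge c_1\mid x)-\mathbb{P}(Y(0)\ge c_1,\,Y(1)\ge c_1\mid x),
\]
where the inequality uses $c_0\le c_1$. Under monotonicity, $\{Y(0)\ge c_1\}\subseteq\{Y(1)\ge c_1\}$ almost surely, so the last probability equals $\mathbb{P}(Y(0)\ge c_1\mid x)=\mathbb{P}(Y\ge c_1\mid x,Z=0)$ by ignorability. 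Dividing by $p_1(x)$ gives the second term of $U_{\textup{mono}}(x)$, and combining with $U_{\textup{FH}}(x)$ gives the stated upper bound.

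\textbf{Sharpness of the lower bound.} I will take the comonotone coupling $Y(1)=F_1^{-1}(V)$, $Y(0)=F_0^{-1}(V)$ with $V\sim U(0,1)$. Because $F_1\le F_0$, we have $F_1^{-1}\ge F_0^{-1}$ pointwise, so monotonicity holds and the marginals are correct. For this coupling,
\[
\pi=\mathbb{P}\bigl(F_1(c_1^-)\le V<F_0(c_0^-)\bigr)=\max\bigl(0,\;\mathbb{P}(Y(1)\ge c_1)-\mathbb{P}(Y(0)\ge c_0)\bigr),
\]
which is exactly $p_1(x)L_{\textup{FH}}(x)$.

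\textbf{Sharpness of the upper bound.} This is the step I expect to be the main obstacle: I need a monotone coupling that achieves $\pi=\min\bigl(F_0(c_0),\,1-F_1(c_1)-(1-F_0(c_1))\bigr)=\min\bigl(F_0(c_0),\,F_0(c_1)-F_1(c_1)\bigr)$ (using left limits where needed). My plan is a piecewise rearrangement coupling.
\begin{enumerate}
\item On the top block, couple $Y(0)$ restricted to $[c_1,\infty)$ comonotonically with the upper part of $Y(1)$ of the same mass $1-F_0(c_1)$.
\item Pair the lowest quantiles of $Y(0)$, namely the mass of $\{Y(0)<c_0\}$, with the remaining mass of $Y(1)$ in $[c_1,\infty)$, which has size $F_0(c_1)-F_1(c_1)$. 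Pair as much as possible, in a monotone (order-preserving) way within the block.
\item Couple all remaining mass comonotonically.
\end{enumerate}
Monotonicity then has to be checked block by block. In block 2 it is automatic, since $Y(0)<c_0\le c_1\le Y(1)$. In blocks 1 and 3 it should follow from a dominance argument: removing the same-rank pieces from $F_0$ and $F_1$ preserves $F_1\le F_0$ on the residual measures. This residual-dominance claim is the piece I would verify carefully, if necessary by cutting the quantile interval at $F_1(c_1)$ and $F_0(c_1)$ and comparing the residual quantile functions. With the verification done, $\pi$ equals the claimed minimum, and dividing by $p_1(x)$ shows $U_{\textup{mono}}(x)$ is attained.
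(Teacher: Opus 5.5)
Your validity argument for the upper bound is correct. Your comonotone coupling $Y(1)=F_1^{-1}(V)$, $Y(0)=F_0^{-1}(V)$ correctly shows that $L_{\textup{FH}}(x)$ is attained under monotonicity; the paper only remarks that $A\cap B$ is compatible with $Y(1)\ge Y(0)$, so this is more than it gives. The gap is the step you flagged, and it cannot be repaired: for $c_1>c_0$ the endpoint $U_{\textup{mono}}(x)$ is in general \emph{not} attainable.

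Monotonicity also gives $\{Y(1)<t\}\subseteq\{Y(0)<t\}$ for every $t$. Since $\{Y(0)<c_0,\,Y(1)\ge c_1\}\subseteq\{Y(0)<t,\,Y(1)\ge t\}$ for $t\in[c_0,c_1]$, every monotone coupling satisfies $\pi(x)\le \mathbb{P}(Y(0)<t\mid x)-\mathbb{P}(Y(1)<t\mid x)$ for all such $t$. $U_{\textup{mono}}$ uses this only at $t=c_1$, and at $t=c_0$ it uses the weaker quantity $\mathbb{P}(Y(0)<c_0\mid x)$.

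For a concrete failure, take $c_0=0$, $c_1=1$, and let $Y(0)$ have density $0.5,0.3,0.2$ and $Y(1)$ density $0.4,0.1,0.5$ on $[-1,0)$, $[0,1)$, $[1,2)$.
Then $F_1\le F_0$, $U_{\textup{FH}}(x)=\min(1,0.5/0.5)=1$, and $U_{\textup{mono}}(x)=1-0.2/0.5=0.6$.
But every monotone coupling has $\pi\le 0.5-0.4=0.1$, so $\textup{GPN}(x)\le 0.2$.

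In your construction, block 2 pairs the lowest $m=0.3$ of $Y(0)$-mass with $Y(1)$-ranks $[0.5,0.8]$. These are not same-rank pieces, so your residual-dominance argument does not apply. Block 3 then couples $Y(1)$ at rank $s$ with $Y(0)$ at rank $0.3+s$; at $s=0$ this gives $Y(1)=-1$, $Y(0)=-0.4$. That would require $F_1^{-1}(s)\ge F_0^{-1}(m+s)$, which $F_1\le F_0$ does not give. No rearrangement inside block 2 helps: after spending $0.3$ of the $0.5$ units of $Y(0)$-mass below $0$, only $0.2$ remain. Yet all $0.4$ units of $Y(1)$-mass below $0$ need a partner with $Y(0)<0$.

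The paper's proof does not close this either. It simply asserts that $\min(\mathbb{P}(A),\mathbb{P}(B)-\mathbb{P}(C))$ is the sharp bound for the numerator, without exhibiting a coupling. So the sharpness claim for the upper endpoint is false in general when $c_1>c_0$; for $c_0=c_1$ it reduces to Corollary~\ref{lemma:mono_bounds_same}.

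Your block idea does work once the target is corrected. For continuous marginals, put $g(x)=\min_{t\in[c_0,c_1]}\{F_0(t)-F_1(t)\}$. Pair the top $g$ of the mass of $\{Y(0)<c_0\}$ (ranks $(F_0(c_0)-g,F_0(c_0)]$) with the bottom $g$ of the mass of $\{Y(1)\ge c_1\}$ (ranks $(F_1(c_1),F_1(c_1)+g]$). Then couple the residuals comonotonically. The residual distribution functions satisfy $G_1\le G_0$:
\begin{itemize}
\item for $t<c_0$, by $F_1\le F_0$ and $F_1(t)\le F_1(c_0)\le F_0(c_0)-g$;
\item for $t\in[c_0,c_1)$, because $g\le F_0(t)-F_1(t)$;
\item for $t\ge c_1$, by $F_1\le F_0$ and $F_1(c_1)\le F_0(c_1)-g$.
\end{itemize}
Hence the sharp upper bound is $g(x)/\mathbb{P}(Y(1)\ge c_1\mid x)$. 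This can be strictly below $U_{\textup{mono}}(x)$; in the example it is $0.2$ versus $0.6$.
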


Proposition \ref{lemma:mono_bounds} establishes the sharp bounds on $\text{GPN}(x)$ under the ignorability and  monotonicity assumptions. Comparing Proposition~\ref{lemma:mono_bounds} with Lemma~\ref{thm:fh_bounds} highlights the role that monotonicity plays in tightening the bounds on $\text{GPN}(x)$.

Specifically, the lower bound remains unchanged, indicating that the monotonicity does not contribute additional information for the lower bound on $\text{GPN}(x)$. The reason is that monotonicity permits the case $Y(1)=Y(0)$, under which $\text{GPN}(x)$ is equal to zero. 
In contrast, the upper bound is tightened, i.e.,  $U_{\text{mono}}(x) \le U_{\text{FH}}(x)$. 
Intuitively, this improvement arises from excluding individuals for whom $Y(0) > Y(1)$.  
Moreover, the $\text{GPN}(x)$ is point-identifiable when $c_1 = c_0$ under the ignorability and monotonicity assumptions, as shown in Corollary \ref{lemma:mono_bounds_same}.

\begin{corollary}[Identification under Monotonicity]
	\label{lemma:mono_bounds_same}
	Under Assumptions \ref{asp1} and \ref{asp2}, if $c_0 = c_1 = c$, $\textup{GPN}(x)$ is point-identifiable, and the identifiability formula is given by 
	\begin{equation*}
		\textup{GPN}(x) = 1 - \frac{\mathbb{P}(Y(0) \ge c \mid X=x)}{\mathbb{P}(Y(1) \ge c \mid X=x)}.
	\end{equation*}
\end{corollary}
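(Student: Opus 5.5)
The plan is to prove the corollary directly from the definition \eqref{eq:gpn_target}, using monotonicity to reduce the joint probability in the numerator to a difference of marginal probabilities. Proposition~\ref{lemma:mono_bounds} could then be used as a consistency check, by showing its lower and upper bounds coincide when $c_0=c_1=c$.

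\textbf{Step 1 (rewrite the numerator).} Under Assumption~\ref{asp1}(a), conditioning on $Z=1$ can be dropped inside the joint event:
\[
\mathbb{P}(Y(0)<c, Y(1)\ge c \mid X=x, Z=1)=\mathbb{P}(Y(0)<c, Y(1)\ge c \mid X=x).
\]
Likewise, $\mathbb{P}(Y\ge c\mid X=x,Z=1)=\mathbb{P}(Y(1)\ge c\mid X=x)$ by SUTVA and unconfoundedness. Overlap ensures these conditional probabilities are well defined. I will also assume the denominator is positive, as is implicit in the definition of GPN.

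\textbf{Step 2 (use monotonicity).} By Assumption~\ref{asp2}, $Y(0)\ge c$ implies $Y(1)\ge Y(0)\ge c$ almost surely. Hence $\{Y(0)\ge c\}\subseteq\{Y(1)\ge c\}$ up to a null set. Decomposing the event $\{Y(1)\ge c\}$ according to whether $Y(0)<c$ or $Y(0)\ge c$ gives
\[
\mathbb{P}(Y(0)<c, Y(1)\ge c\mid X=x)=\mathbb{P}(Y(1)\ge c\mid X=x)-\mathbb{P}(Y(0)\ge c\mid X=x).
\]
Here I need monotonicity to hold conditionally on $X=x$. This follows for almost every $x$ from the unconditional almost-sure statement, which is the one mild technical point to note.

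\textbf{Step 3 (conclude).} Dividing by the denominator yields the stated formula. Both marginals are identified, via $\mathbb{P}(Y\ge c\mid X=x,Z=z)$ for $z=0,1$, so $\textup{GPN}(x)$ is point-identified.

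\textbf{Cross-check against Proposition~\ref{lemma:mono_bounds}.} With $c_0=c_1$, the second term in $U_{\textup{mono}}(x)$ equals the lower-bound expression $1-\mathbb{P}(Y\ge c\mid x,Z=0)/\mathbb{P}(Y\ge c\mid x,Z=1)$. This expression is nonnegative, because monotonicity gives $\mathbb{P}(Y(0)\ge c\mid x)\le\mathbb{P}(Y(1)\ge c\mid x)$. It is also at most $U_{\textup{FH}}(x)$: the difference of the two is
\[
\frac{\mathbb{P}(Y(1)\ge c\mid x)-1}{\mathbb{P}(Y(1)\ge c\mid x)}\le 0.
\]
So $L_{\textup{mono}}(x)=U_{\textup{mono}}(x)$, which confirms point identification.

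I do not expect a real obstacle. The only care needed is the null-set and conditional-on-$X$ handling in Step 2.
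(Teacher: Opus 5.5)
Your proof is correct and follows essentially the same argument as the paper: monotonicity gives $\{Y(0)\ge c\}\subseteq\{Y(1)\ge c\}$, so the numerator becomes $\mathbb{P}(Y(1)\ge c\mid X=x)-\mathbb{P}(Y(0)\ge c\mid X=x)$, and you then divide by $\mathbb{P}(Y(1)\ge c\mid X=x)$. The paper's proof leaves implicit both the removal of the $Z=1$ conditioning via unconfoundedness and the conditional-on-$X$ version of monotonicity, which you make explicit; your cross-check that $L_{\textup{mono}}(x)=U_{\textup{mono}}(x)$ when $c_0=c_1$ is also correct but is not part of the paper's argument.
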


Corollary~\ref{lemma:mono_bounds_same} aligns with our intuition. When $c_1= c_0 = c$, the continuous outcomes is collapsed into a binary variable, and the problem reduces to the binary outcomes setting studied in~\cite{Pearl1999}. 
However, when $c_1 > c_0$, the equivalence no longer holds. Although monotonicity restricts the direction of the treatment effect, it leaves the \emph{strength} of dependence between $Y(1)$ and $Y(0)$ largely unspecified. 
This suggests that obtaining sharper bounds requires capturing the continuous dependence structure.
We achieve this using copula models in Section~\ref{sec:copula}.

\section{Improved Bounds via Copula-Based Sensitivity Analysis}
\label{sec:copula}

As discussed in Section~\ref{sec:general_bounds}, the Fréchet–-Hoeffding bounds are often too wide to be informative, whereas the monotonicity is restrictive and provides only limited improvement in tightening the bounds. In this section, we propose an alternative approach that improves the bounds by explicitly modeling the dependence between the potential outcomes.

\subsection{Copula Representation of GPN}
\label{subsec:copula_rep}

For ease of presentation, we define 
\[
\begin{aligned}
	u_1(x) = \mathbb{P}(Y(1) \le c_1 \mid x),~u_0(x) = \mathbb{P}(Y(0) \le c_0 \mid x),
\end{aligned}
\]
where are identifiable under Assumption \ref{asp1}.  
According to Sklar's Theorem~\cite{copulas2006,sklar1959fonctions}, for any continuous joint distribution, there exists a unique copula $C_\rho: [0,1]^2 \to [0,1]$ such that
\[
\mathbb{P}(Y(1) \le c_1, Y(0) \le c_0 \mid x) = C_\rho(u_1(x), u_0(x)),
\]
where $\rho$ is the association parameter that governs the dependence between the potential outcomes.  
This copula representation enables an explicit expression of $\textup{GPN}(x)$ in terms of the marginal distributions and the parameter~$\rho$.

\begin{proposition}[Copula Representation]
	\label{prop:gpn_copula}
	For a given copula family $C_{\rho}(\cdot, \cdot)$,  $\textup{GPN}(x)$ admits the following formulation: 
	\begin{align}
		\label{eq:gpn_copula_gen}
		\textup{GPN}(x; \rho) 
		= \frac{u_0(x) - C_\rho(u_1(x), u_0(x))}{1 - u_1(x)}.
	\end{align}
\end{proposition}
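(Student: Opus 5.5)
The plan is to start from the ratio representation of $\textup{GPN}(x)$ in \eqref{eq:gpn_target} and rewrite its numerator and denominator in terms of the conditional law of $(Y(1),Y(0))$ given $X=x$. The numerator is $\mathbb{P}(Y(0)<c_0, Y(1)\ge c_1 \mid X=x, Z=1)$. By the unconfoundedness part of Assumption~\ref{asp1}, the joint distribution of $(Y(1),Y(0))$ given $X=x, Z=1$ equals its distribution given $X=x$ alone, so the numerator becomes $\mathbb{P}(Y(0)<c_0, Y(1)\ge c_1\mid x)$. For the denominator, SUTVA gives $\{Z=1, Y\ge c_1\} = \{Z=1, Y(1)\ge c_1\}$, and unconfoundedness then gives $\mathbb{P}(Y\ge c_1\mid X=x,Z=1)=\mathbb{P}(Y(1)\ge c_1\mid x)$. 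Overlap guarantees that conditioning on $Z=1$ is well defined.

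Next, I would express both pieces through $u_1(x)$, $u_0(x)$ and the copula.

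\textbf{Denominator.} Since the potential outcomes are continuous, $\mathbb{P}(Y(1)=c_1\mid x)=0$. Hence $\mathbb{P}(Y(1)\ge c_1\mid x)=1-u_1(x)$.

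\textbf{Numerator.} Decompose the event $\{Y(0)<c_0\}$ as the disjoint union of $\{Y(0)<c_0, Y(1)\ge c_1\}$ and $\{Y(0)<c_0, Y(1)<c_1\}$. This gives
\[
\mathbb{P}(Y(0)<c_0, Y(1)\ge c_1\mid x)=\mathbb{P}(Y(0)<c_0\mid x)-\mathbb{P}(Y(0)<c_0, Y(1)<c_1\mid x).
\]
Continuity lets us replace strict inequalities with non-strict ones. The first term then becomes $u_0(x)$, and the second becomes $\mathbb{P}(Y(1)\le c_1, Y(0)\le c_0\mid x)$. Atoms on the lines $\{Y(1)=c_1\}$ and $\{Y(0)=c_0\}$ carry zero mass because the marginals are continuous, so this replacement changes nothing. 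By Sklar's theorem as stated above, the second term equals $C_\rho(u_1(x),u_0(x))$. Dividing the numerator by the denominator gives \eqref{eq:gpn_copula_gen}.

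The only delicate step is the treatment of strict versus non-strict inequalities at the thresholds. This is where continuity of the conditional marginals of $Y(1)$ and $Y(0)$ given $X=x$ is needed. I would state this continuity explicitly as the standing assumption that makes the copula unique and the representation exact. Otherwise the identity would involve left limits $F(c^-)$, and the copula would be determined only on the range of the marginals. The remaining steps are routine applications of unconfoundedness and SUTVA.
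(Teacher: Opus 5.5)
Your proposal is correct and follows the same argument the paper relies on. The paper gives no separate appendix proof and treats the formula as immediate from the ratio form \eqref{eq:gpn_target}, unconfoundedness, the complement decomposition of $\{Y(0)<c_0\}$, and Sklar's theorem, exactly as you do. Your explicit remark that continuity of the conditional marginals is needed to pass between strict and non-strict inequalities is a clarification the paper leaves implicit, as is the requirement $1-u_1(x)>0$ for the conditioning event to have positive probability, which you should also state.
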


Equation~\eqref{eq:gpn_copula_gen} decomposes $\textup{GPN}(x)$ into two components: the marginal probabilities $u_1(x)$ and $u_0(x)$, which are point-identifiable under Assumption~\ref{asp1}, and the copula term $C_\rho$, which characterizes the unobserved dependence between the potential outcomes. Although the association parameter $\rho$ is not identifiable due to the fundamental problem of causal inference~\cite{Holland1986}, this formulation provides a natural basis for deriving bounds on $\textup{GPN}(x)$ by leveraging information about the dependence structure.

\subsection{Improved Bounds via Dependence Restriction}
\label{subsec:bounds_restriction}

Although the association parameter $\rho$ is not identifiable under Assumption~\ref{asp1}, in real-world applications, we may provide a plausible range for it, informed by domain and expert knowledge~\cite{Bodik2025}.

\begin{assumption}[Dependence Restriction]
	\label{ass:rho_bound}
	$\rho \in [\rho_{\textup{min}}, \rho_{\textup{max}}]$, where the lower and upper bounds are given.
\end{assumption}

Assumption~\ref{ass:rho_bound} restricts the strength of dependence between the potential outcomes. Under this assumption, bounds on $\rho$ translate directly into bounds on the copula value and, consequently, into bounds on $\textup{GPN}(x)$. Specifically, for an arbitrary pair $(u, v)$, let 
\begin{align*}
	C_{\textup{min}}(u, v) ={}& \min_{\rho\in [\rho_{\textup{min}}, \rho_{\textup{max}}]} C_\rho(u, v),  \\
	C_{\textup{max}}(u, v) ={}& \max_{\rho\in [\rho_{\textup{min}}, \rho_{\textup{max}}]} C_\rho(u, v),
\end{align*}
then we have the following conclusion. 

\begin{theorem}[Sharp Bounds under Copula Model]
	\label{thm:improved_bounds}
	Under Assumptions \ref{asp1} and \ref{ass:rho_bound}, the sharp bounds on $\textup{GPN}(x)$ are 
	\[
	\bigg[ \frac{u_0(x) - C_{\textup{max}}(u_1(x), u_0(x))}{1 - u_1(x)}, \frac{u_0(x) - C_{\textup{min}}(u_1(x), u_0(x))}{1 - u_1(x)} \bigg].
	\]
\end{theorem}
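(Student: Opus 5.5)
The plan is to reduce the statement to Proposition~\ref{prop:gpn_copula} and then analyse the map $\rho \mapsto \textup{GPN}(x;\rho)$ on the interval $[\rho_{\min},\rho_{\max}]$.

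\textbf{Step 1 (validity).} Fix $x$. Under Assumption~\ref{asp1}, $u_1(x)$ and $u_0(x)$ are identified from the observed conditional distributions. Unconfoundedness gives
\[
\mathbb{P}(Y(0)<c_0, Y(1)\ge c_1\mid X=x, Z=1)=\mathbb{P}(Y(0)<c_0, Y(1)\ge c_1\mid X=x),
\]
and the denominator in \eqref{eq:gpn_target} equals $1-u_1(x)$. Continuity lets us ignore the distinction between $<$ and $\le$. Writing the joint probability as $u_0(x)-C_\rho(u_1(x),u_0(x))$ recovers \eqref{eq:gpn_copula_gen}.

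For any admissible $\rho\in[\rho_{\min},\rho_{\max}]$ we have $C_{\min}\le C_\rho\le C_{\max}$ by definition. The map $t\mapsto (u_0(x)-t)/(1-u_1(x))$ is strictly decreasing, since $1-u_1(x)>0$; this is needed anyway for GPN to be well defined. So $\textup{GPN}(x)$ lies in the stated interval.

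\textbf{Step 2 (attainment of the endpoints).} The family $\{C_\rho\}$ is parametric and, for fixed $(u,v)$, I will assume $\rho\mapsto C_\rho(u,v)$ is continuous on the compact interval. Continuity guarantees that the min and max defining $C_{\min}$ and $C_{\max}$ are attained at some $\rho_*$ and $\rho^*$.

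Now construct a joint law of $(Y(1),Y(0))$ given $X=x$ with the identified marginals $F_{Y(1)|X}$ and $F_{Y(0)|X}$ and copula $C_{\rho^*}$. Take it independent of $Z$ given $X$. This law satisfies Assumptions~\ref{asp1} and~\ref{ass:rho_bound} and reproduces the observed distribution of $(X,Z,Y)$. Its GPN equals the lower endpoint. The same construction with $\rho_*$ gives the upper endpoint, so both endpoints are sharp.

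\textbf{Step 3 (the interior, main obstacle).} Sharpness in the sense of the paper presumably also means that every point in between is attainable. Continuity of $\rho\mapsto C_\rho(u_1(x),u_0(x))$ on the connected interval $[\rho_{\min},\rho_{\max}]$ gives this by the intermediate value theorem, since every value between $C_{\min}$ and $C_{\max}$ is then attained.

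The delicate point is exactly this continuity (and the attainment in Step 2). It is not part of Assumption~\ref{ass:rho_bound}, so I would state it as a regularity condition on the copula family. It holds for the Gaussian, Clayton, Frank, and Gumbel families, and I would check it there.

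If only pointwise sharpness of the bounds is intended, Steps 1 and 2 suffice. In that case one may also replace min and max by inf and sup, with the bounds then approached rather than attained.
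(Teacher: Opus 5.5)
Your proposal is correct and follows the route the paper has in mind: Proposition~\ref{prop:gpn_copula} combined with the fact that $\textup{GPN}(x;\rho)$ is decreasing in the copula value, so the extremes of $C_\rho(u_1(x),u_0(x))$ over $[\rho_{\textup{min}},\rho_{\textup{max}}]$ give the endpoints. The paper justifies the theorem only by the remark that bounds on $\rho$ ``translate directly'' into bounds on the copula value and hence on $\textup{GPN}(x)$, so your construction with copula $C_{\rho^*}$ (resp.\ $C_{\rho_*}$), independent of $Z$ given $X$, supplies the attainment argument it leaves implicit; since the paper defines $C_{\textup{min}}$ and $C_{\textup{max}}$ as attained minima and maxima, your continuity hypothesis is needed only for the interior-point claim in Step~3, not for endpoint sharpness.
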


Theorem~\ref{thm:improved_bounds} provides a flexible framework for obtaining bounds on $\textup{GPN}(x)$ by allowing the incorporation of domain priors regarding $\rho_{\textup{min}}$ and $\rho_{\textup{max}}$. Below, we outline several practical implications.

First, Theorem~\ref{thm:improved_bounds} includes the Fréchet--Hoeffding bounds as a special case corresponding to the absence of any dependence restriction. For example, when we set $\rho \in [-1, 1]$ for the Gaussian copula, the resulting bounds reduce to the Fréchet--Hoeffding bounds presented in Lemma~\ref{thm:fh_bounds} (see Appendix~\ref{app:consistency} for details). This also  indicates how the worst-case bounds (i.e., the Fréchet--Hoeffding bounds) can be improved: any domain knowledge that restricts $\rho$ to a subset of its full range serves to tighten the identification interval.

Second, even if it may be difficult to specify exact values for $\rho_{\textup{min}}$ and $\rho_{\textup{max}}$ due to limited expert knowledge, determining the sign of the dependence is often feasible. A simple restriction is to assume positive dependence (i.e., $\rho_{\textup{min}} = 0$). This assumption is mild and reasonable in many contexts. For instance, in medical studies, unmeasured factors such as a patient's overall health or genetic resilience are likely to influence both $Y(0)$ and $Y(1)$ in the same direction, resulting in a positive correlation between the potential outcomes~\cite{efron1991compliance}. Furthermore, Wu et al.~\cite{Wu-etal-2024-Harm} and Bodik et al.~\cite{Bodik2025} provide additional examples illustrating the plausibility of assuming positive dependence.

Third, Theorem~\ref{thm:improved_bounds} naturally establishes a sensitivity analysis framework for $\textup{GPN}(x)$ by treating $\rho$ as the sensitivity parameter, enabling the assessment of the robustness of our conclusions. 
Such insights are not captured in a purely worst-case analysis, as exemplified by the Fréchet-–Hoeffding bounds and the sharp bounds derived under the monotonicity assumption. 
In addition, it reveals how expert knowledge can further refine these bounds; see our application in Section~\ref{subsec:application} for details.

\section{Empirical Implementation of the Bounds}
\label{sec:estimation}
The preceding analysis shows that the sharp bounds in Theorem~\ref{thm:improved_bounds} for $\textup{GPN}(x)$ are fully determined by the conditional distribution
functions
$u_z(x) = \mathbb{P}(Y(z) \le c_z \mid X=x)$ for $z \in \{0,1\},$ 
together with the specified dependence structure in the copula framework. 
Likewise, the bounds in Lemma~\ref{thm:fh_bounds} and Proposition~\ref{lemma:mono_bounds} can be expressed in terms of these similar  quantities. Consequently, the statistical task reduces to estimating the conditional distribution
functions.

We take $u_1(x)$ as an example to illustrate the estimation procedure; the estimation of $u_0(x)$ follows analogously. Under Assumption~\ref{asp1}, $u_1(x)$ is identified as 
$$
u_1(x) = \mathbb{E}[\mathbb{I}(Y \le c_1) \mid X=x, Z=1].
$$
Based on it, a direct approach to estimate $u_1(x)$ is to regress the indicator $\mathbb{I}(Y \le c_1)$ on $X$ using the treated units. 
To enhance robustness to model misspecification, we instead employ the doubly robust (DR) estimator for the conditional distribution function  proposed by~\cite{givens2024conditional}.
For clarity, we outline the estimation procedure for the DR estimator, which consists of two main steps:

\textbf{Step 1: Nuisance Parameter Estimation.} We first estimate the nuisance parameters for the efficient influence function in the DR estimator: the propensity score $\hat e(x)=\hat{\mathbb{P}}(Z=1\mid X=x)$ and the initial conditional distribution function $\hat\nu_1(x)=\hat{\mathbb{P}}(Y\le c_1\mid Z=1, X=x)$.
We could estimate these quantities using parametric models (e.g., linear models) or flexible machine learning methods (e.g., MLPs).

\textbf{Step 2: Pseudo-Outcome Regression.}
Using the nuisance parameter estimates from Step 1, we construct a doubly robust pseudo-outcome $\hat{\varphi}_{1,i}$ for each unit $i$.
Derived from the efficient influence function, it takes the form
\begin{equation}
	\label{eq:dr_pseudo}
	\hat{\varphi}_{1,i} = \hat{\nu}_1(X_i) + \frac{Z_i}{\hat{e}(X_i)} \left( \mathbb{I}(Y_i \le c_1) - \hat{\nu}_1(X_i) \right).
\end{equation}
We then regress the pseudo-outcomes $\{\hat{\varphi}_{1,i}\}$ on the covariates $\{X_i\}$ to obtain the estimator $\hat{u}_1(x)$.

Once the doubly robust estimates $\hat{u}_1(x)$ and $\hat{u}_0(x)$ are obtained, we could compute the sharp bounds on $\textup{GPN}(x)$ in Theorem~\ref{thm:improved_bounds} using a plug-in approach. 
For example, under the Gaussian copula framework, substituting these estimates into the analytic expressions derived in Corollary~\ref{prop:bounds_gaussian} with the specified correlation range $[\rho_{\textup{min}}, \rho_{\textup{max}}]$ yields 
\begin{align*} 
	\hat{L}_{\textup{copula}}(x) &= \frac{\hat{u}_0(x) - \Phi_{\rho_{\textup{max}}}(\Phi^{-1}(\hat{u}_1(x)), \Phi^{-1}(\hat{u}_0(x)))}{1 - \hat{u}_1(x)}, \\
	\hat{U}_{\textup{copula}}(x) &= \frac{\hat{u}_0(x) - \Phi_{\rho_{\textup{min}}}(\Phi^{-1}(\hat{u}_1(x)), \Phi^{-1}(\hat{u}_0(x)))}{1 - \hat{u}_1(x)}.
\end{align*}

\subsection{Illustration with Gaussian Copula}
\label{subsec:gaussian_illus}
We illustrate the practical application of Theorem~\ref{thm:improved_bounds} using the widely employed Gaussian copula, which is explicitly given by
$C_\rho(u, v) = \Phi_{\rho}(\Phi^{-1}(u), \Phi^{-1}(v)),$ 
where $\Phi_{\rho}$ denotes the bivariate standard normal cumulative distribution function with correlation $\rho$, and $\Phi^{-1}$ is its inverse. Since $C_\rho(u, v)$ is monotonic in $\rho$, the bounds on $\textup{GPN}(x)$ can be obtained directly by applying Theorem~\ref{thm:improved_bounds}.

\begin{corollary}[Sharp Bounds with Gaussian Copula]
	\label{prop:bounds_gaussian}
	Under Assumptions~\ref{asp1} and \ref{ass:rho_bound}, with a Gaussian copula, $\textup{GPN}(x)$ lies within the interval $[L_{\textup{copula}}(x), U_{\textup{copula}}(x)]$, where 
	\begin{align*}
		L_{\textup{copula}}(x) &= \frac{u_0(x) - \Phi_{\rho_{\textup{max}}}(\Phi^{-1}(u_1(x)), \Phi^{-1}(u_0(x)))}{1 - u_1(x)},\\
		U_{\textup{copula}}(x) &= \frac{u_0(x) - \Phi_{\rho_{\textup{min}}}(\Phi^{-1}(u_1(x)), \Phi^{-1}(u_0(x)))}{1 - u_1(x)},
	\end{align*}
	where
	$$
	\begin{aligned}
		&\Phi_{\rho}(\Phi^{-1}(u_1(x)), \Phi^{-1}(u_0(x)))\\
		=& \int_{-\infty}^{a} \int_{-\infty}^{b} \frac{1}{2 \pi \sqrt{1-\rho^2}} \exp \left(-\frac{s^2+t^2-2 \rho s t}{2\left(1-\rho^2\right)}\right) \intd s \intd t, 
	\end{aligned}
	$$
	with $a = \Phi^{-1}(u_0(x))$ and $b = \Phi^{-1}(u_1(x))$. 
\end{corollary}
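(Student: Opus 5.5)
The corollary follows from Theorem~\ref{thm:improved_bounds} once we show that the Gaussian copula $C_\rho(u,v)=\Phi_\rho(\Phi^{-1}(u),\Phi^{-1}(v))$ is nondecreasing in $\rho$ for every fixed $(u,v)\in(0,1)^2$. In that case the extremal copula values are attained at the endpoints of the interval: $C_{\textup{max}}(u,v)=\Phi_{\rho_{\textup{max}}}(\Phi^{-1}(u),\Phi^{-1}(v))$ and $C_{\textup{min}}(u,v)=\Phi_{\rho_{\textup{min}}}(\Phi^{-1}(u),\Phi^{-1}(v))$. Substituting $u=u_1(x)$ and $v=u_0(x)$ into the bounds of Theorem~\ref{thm:improved_bounds} gives $L_{\textup{copula}}(x)$ and $U_{\textup{copula}}(x)$ exactly as stated. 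The displayed double integral is just the definition of the bivariate standard normal CDF with correlation $\rho$, evaluated at the upper limits $a=\Phi^{-1}(u_0(x))$ and $b=\Phi^{-1}(u_1(x))$. Because $\Phi_\rho$ is symmetric in its two arguments, the order of the limits does not matter.

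The key step is the monotonicity in $\rho$. I would prove it with Plackett's identity. Let $\phi_\rho$ denote the bivariate normal density with correlation $\rho$. It satisfies the heat-type equation $\partial_\rho \phi_\rho(s,t)=\partial_s\partial_t\phi_\rho(s,t)$, which one checks by direct differentiation of the density in the statement. Integrating over $(-\infty,a]\times(-\infty,b]$ and using that $\phi_\rho$ and its first derivatives vanish at $-\infty$ gives
\[
\frac{\partial}{\partial\rho}\Phi_\rho(a,b)=\phi_\rho(a,b)=\frac{1}{2\pi\sqrt{1-\rho^2}}\exp\!\left(-\frac{a^2+b^2-2\rho ab}{2(1-\rho^2)}\right)>0
\]
for $\rho\in(-1,1)$. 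The interchange of differentiation and integration is justified by dominated convergence on compact subintervals of $(-1,1)$. Hence $C_\rho(u,v)$ is strictly increasing in $\rho$ on $(-1,1)$. If the endpoints $\rho=\pm1$ are allowed, we extend by continuity, since $\Phi_\rho(a,b)\to\min(\Phi(a),\Phi(b))$ as $\rho\to1$ and $\to\max(\Phi(a)+\Phi(b)-1,0)$ as $\rho\to-1$. Boundary cases with $u_z(x)\in\{0,1\}$ are excluded by the positivity of $1-u_1(x)$ needed for the definition, or are handled by the same continuity argument.

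I expect the main obstacle to be this monotonicity step, namely verifying $\partial_\rho\phi_\rho=\partial_s\partial_t\phi_\rho$ cleanly. An alternative route that avoids the PDE is Slepian's inequality, or a coupling argument: write $(S,T)=(W_1,\rho W_1+\sqrt{1-\rho^2}W_2)$ with $W_1,W_2$ independent standard normals. Everything else in the proof is the direct substitution into Theorem~\ref{thm:improved_bounds}, so sharpness is inherited from that theorem. Each endpoint value is attained by the Gaussian copula with $\rho=\rho_{\textup{max}}$ or $\rho=\rho_{\textup{min}}$, and by continuity of $\rho\mapsto C_\rho$ every intermediate value is attained by some intermediate $\rho$.
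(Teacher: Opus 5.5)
Your proposal is correct and follows the paper's argument. The paper also notes that the Gaussian copula $C_\rho(u,v)$ is monotone in $\rho$, so $C_{\textup{max}}$ and $C_{\textup{min}}$ are attained at $\rho_{\textup{max}}$ and $\rho_{\textup{min}}$, and then substitutes into Theorem~\ref{thm:improved_bounds}. The paper only asserts that monotonicity, whereas your Plackett-identity computation $\partial_\rho \Phi_\rho(a,b)=\phi_\rho(a,b)>0$ actually proves it, filling in the one step the paper leaves implicit.
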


Based on Corollary~\ref{prop:bounds_gaussian}, setting $\rho \in [0, 1]$ allows researchers to quantify attribution under the assumption of non-negative dependence, while narrower ranges (e.g., $\rho \in [0, 0.5]$) can be used to assess robustness to weak-to-moderate positive associations. Additionally, several other families of copulas can be employed to flexibly model the joint distribution of potential outcomes; interested readers may refer to~\cite{copulas2006, joe2014dependence} for further details.

\section{Experiments}
\label{sec:exp}
In this section, we evaluate the finite-sample performance of the proposed methods and then illustrate their practical utility through a real-world application.

\subsection{Simulation Study}
\label{subsec:sim}

We consider three data-generating processes (DGPs). In all scenarios, we generate a 3-dimensional covariate vector $X = (X_0, X_1, X_2)^\top$ from a multivariate normal distribution $\mathcal{N}(\mu_X, \Sigma_X)$, with mean $\mu_X = (0.5, -0.5, 0)^\top$ and covariance matrix
$$
\Sigma_X = \begin{bmatrix} 1 & 0.3 & 0.1 \\ 0.3 & 1 & 0.2 \\ 0.1 & 0.2 & 1 \end{bmatrix}.
$$
The binary treatment $Z$ is generated from a logistic model: $\mathbb{P}(Z=1 \mid X) = \sigma(-1.5 + 0.5 e^{X_0} - 0.8 X_1 + 0.4 X_2)$, where $\sigma(\cdot)$ is the sigmoid function. The potential outcomes are generated as $Y(z) = \mu_z(X) + \varepsilon_z$ for $z \in \{0, 1\}$.

We consider three cases for the outcome mechanisms.

\emph{(a) Monotonic Case:} 
The treatment effect is non-negative for all units. We set $\mu_0(X) = 10 + 2X_0 + X_1 - 0.5X_2$ and $\mu_1(X) = \mu_0(X) + X_0^2$. 
The errors  $\varepsilon_1 = \varepsilon_0 \sim \mathcal{N}(0, 1)$.

\emph{(b) Linear Case ($\rho = 0.5$):} 
The means are linear in $X$: $\mu_0(X) = 10 + 2X_0 + X_1 - 0.5X_2$ and $\mu_1(X) = 12 + 1.5X_0 + 1.2X_1 + 1.5X_2$. The error terms $(\varepsilon_0, \varepsilon_1)^\top$ are drawn from $\mathcal{N}(0, \Sigma_\varepsilon)$ with covariance $\Sigma_\varepsilon = \begin{bmatrix} 1 & 0.5 \\ 0.5 & 1 \end{bmatrix}$.

\emph{(c) Nonlinear Case ($\rho = 0.5$):} The outcomes exhibit complex nonlinear dependencies, with $\mu_0(X) = 10 + 0.5X_0^2 + X_1^2 + X_0X_2 - 2X_1X_2$ and $\mu_1(X) = 13 + 0.5X_0^2 + 2.5X_1^2 + 1.5X_0X_2 + 4X_1X_2$. As in the linear case, the error terms $(\varepsilon_0, \varepsilon_1)^\top$ are drawn from $\mathcal{N}(0, \Sigma_\varepsilon)$.

We set the thresholds as $c_0 = 10.0$ and $c_1 = 12.0$, and the training sample size to $n=4{,}096$.

\textbf{Implementation.} 
The marginal distributions $u_z(x) = \mathbb{P}(Y(z) \le c_z \mid X=x)$ and the monotonicity correction term $u_{0, c_1}(x) = \mathbb{P}(Y(0) \le c_1 \mid X=x)$ are estimated using the DR learner. We implement the nuisance and target estimators using Multi-Layer Perceptrons (MLPs) in PyTorch, ensuring robustness against model misspecification.

\textbf{Competing Methods}. 
We compare four bounding strategies: 
(1) \emph{FH}: The Fréchet--Hoeffding bounds under only ignorability, see Lemma \ref{thm:fh_bounds}; 
(2) \emph{Monotonicity}: the bounds under ignorability and additional monotonicity, see Proposition \ref{lemma:mono_bounds}; 
(3) \emph{Conservative}: the bounds under ignorability and a conservative restriction on $\rho$ (i.e., $\rho \in [0,1]$) under the Gaussian copula, see Theorem~\ref{thm:improved_bounds} and Corollary~\ref{prop:bounds_gaussian}; (4) \emph{Expert}: similar to the \emph{Conservative} method, we impose a more stringent restriction on $\rho$: 
$\rho \in [0.2, 0.7]$ for Cases (b)–(c) and $\rho \in [0.5, 1.0]$ for Case (a). For the Expert method, the range of $\rho$ is specified solely for ease of comparison, with the aim of examining how varying levels of prior information influence the bounds.

\textbf{Evaluation Metrics.} 
We evaluate numerical performance by computing the mean squared error (MSE) between each estimated bound (the lower bound $\hat{L}(x)$ and the upper bound $\hat{U}(x)$) and the true $\textup{GPN}(x)$: 
\begin{align*}
\text{MSE-LB} ={}& n^{-1} \sum_{i=1}^n \big( \hat{L}(X_i) - \text{GPN}(X_i)\big)^2, \\
\text{MSE-UB} ={}& n^{-1} \sum_{i=1}^n \big ( \hat{U}(X_i) - \text{GPN}(X_i) \big)^2,
\end{align*}
along with the average interval width:
$$
\text{Width} = n^{-1} \sum_{i=1}^n \left( \hat{U}(X_i) - \hat{L}(X_i) \right).$$ 
Smaller values of these metrics indicate that the estimated bounds on $\text{GPN}(x)$ are tighter. 
Consistent with the DGPs, the true $\text{GPN}(x)$ is derived analytically. 
Specifically, for case (a), 
$
\mathbb{P}(Y(1) \le c_1, Y(0) \le c_0 \mid X=x) = \Phi\left( \min(c_0, c_1 - \tau(x)) - \mu_0(x) \right)
$, where $\tau(x) = \mu_1(x) - \mu_0(x)$. 
For cases (b) and (c), 
$\mathbb{P}(Y(1) \le c_1, Y(0) \le c_0 \mid X=x) = \Phi_\rho(c_1 - \mu_1(x), c_0 - \mu_0(x))$.

\begin{table*}[t]
\caption{Performance of $\textup{GPN}(x)$ bounds across three scenarios ($n=4{,}096$).}
\label{tab:mse_results}
\begin{center}
	\begin{small}
		\begin{sc} 
			\resizebox{\textwidth}{!}{%
				\begin{tabular}{lccccccccc}
					\toprule
					\multirow{2}{*}{Method} & \multicolumn{3}{c}{(a) Monotonic Case} & \multicolumn{3}{c}{(b) Linear DGP} & \multicolumn{3}{c}{(c) Nonlinear DGP} \\
					\cmidrule(lr){2-4} \cmidrule(lr){5-7} \cmidrule(lr){8-10}
					& MSE-LB & MSE-UB & Width & MSE-LB & MSE-UB & Width & MSE-LB & MSE-UB & Width \\ 
					\midrule
					FH       & 0.015 & 0.516 & 0.547 & 0.094 & 0.218 & 0.435 & 0.057 & 0.102 & 0.079 \\
					Monotonicity            & 0.015 & 0.463 & 0.506 & 0.094 & 0.139 & 0.331 & 0.057 & 0.077 & 0.052 \\
					Conservative ($\rho \in [0,1]$) & 0.015 & 0.226 & 0.299 & 0.094 & 0.068 & 0.132 & 0.057 & 0.063 & 0.032 \\
					Expert                  & \textbf{0.015} & \textbf{0.087} & \textbf{0.159} & \textbf{0.065} & \textbf{0.060} & \textbf{0.066} & \textbf{0.056} & \textbf{0.060} & \textbf{0.016} \\ 
					\bottomrule
				\end{tabular}%
			}
		\end{sc}
	\end{small}
\end{center}
\end{table*}

\begin{figure*}[htbp]
\centering
\includegraphics[width=\textwidth]{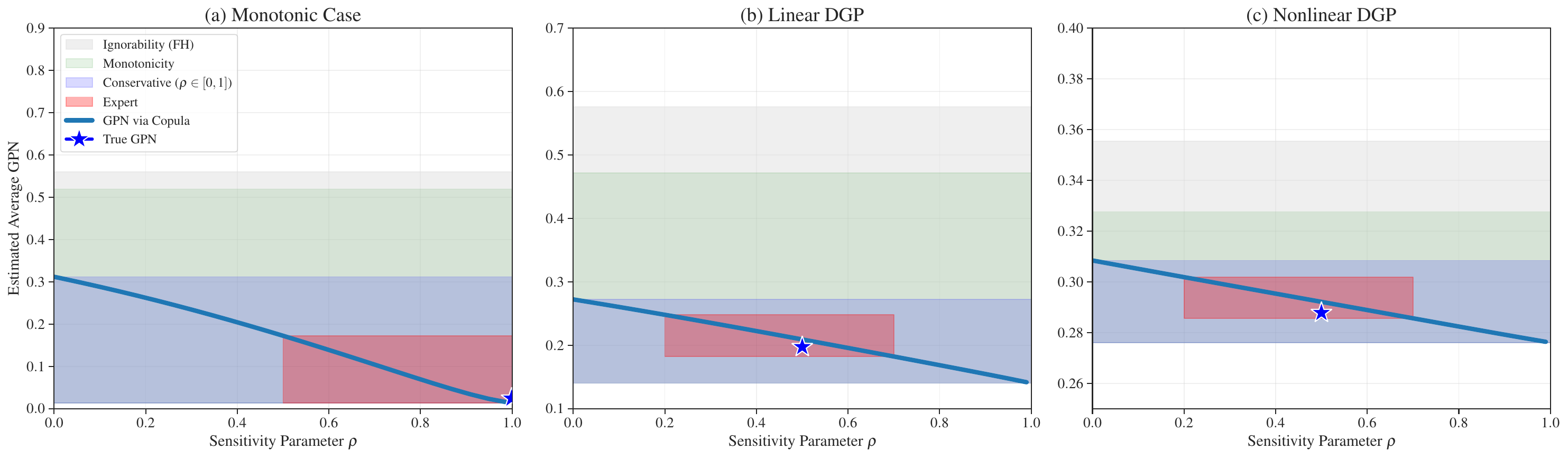}
\caption{
	Sensitivity analysis of the average GPN. The truth (star) falls within the Expert region and on the estimated Copula curve.
}
\label{fig:sensitivity}
\end{figure*}

\textbf{Numerical Results.} 
The numerical results are presented  in Table~\ref{tab:mse_results}.  First, the estimated bounds exhibit a clear performance ordering: the Expert method outperforms the Conservation method, and both outperform the Monotonicity and FH methods.
Second, the Expert method significantly outperforms the other methods, indicating that incorporating domain knowledge about $\rho$ is highly effective in narrowing the bounds on $\text{GPN}(x)$. 
Third, it is noteworthy that in case (a), where the monotonicity assumption is satisfied, the Monotonicity method still yields a wider interval (Width: 0.506) and a higher MSE-UB (0.463) than the copula-based Expert method (Width: 0.159, MSE-UB: 0.087). This suggests that in continuous outcome settings, the standard monotonicity assumption is often less informative than explicitly modeling the dependence structure.

Figure~\ref{fig:sensitivity} illustrates the estimated bounds on the average GPN, where the star indicates the true value.  
The estimated average GPN is given by $n^{-1}\sum_{i=1}^n \widehat{\textup{GPN}}(X_i)$, where $\widehat{\textup{GPN}}(X_i)$ denotes the generic estimated bounds obtained using different methods.
In Figure~\ref{fig:sensitivity}, the gray shaded region corresponds to the average FH bounds. The green shaded region shows the Monotonicity bounds. The blue and red shaded regions represent the Conservative and Expert bounds, respectively. The Conservative and Expert bounds are substantially tighter than those under Ignorability and Monotonicity, while still containing the true value.

Figure~\ref{fig:sensitivity} also presents a sensitivity analysis with respect to $\rho$, where the blue line represents the estimated average GPN under a Gaussian copula across different values of $\rho$. This explicitly illustrates how changes in $\rho$ affect the estimated GPN, highlighting the importance of choosing an appropriate value of $\rho$ to obtain tighter bounds on GPN.

\textbf{Sensitivity Analysis.} We further conduct a sensitivity analysis to assess the impact of copula family misspecification. We perform additional simulations in which the true joint distribution of $(Y(1), Y(0))$ follows a Gaussian, Clayton, or Gumbel copula, while estimation is still carried out using the Gaussian copula as a working model. We consider three levels of dependence, indexed by Kendall's $\tau \in {0.20, 0.33, 0.50}$, corresponding approximately to $\rho \in {0.31, 0.50, 0.71}$ under the Gaussian working model. We retain the same covariate distribution and treatment-assignment mechanism as in Cases (a)--(c), and generate the potential outcomes from the linear mean model: $Y(z)=\mu_z(X)+\varepsilon_z$ for $z \in {0, 1}$, where $\mu_0(X)=10+2X_0+X_1-0.5X_2$ and $\mu_1(X)=12+1.5X_0+1.2X_1+1.5X_2$, with $(\varepsilon_0,\varepsilon_1)$ generated from the specified copula family. The numerical results are reported in Table~\ref{tab:sensitivity}. Across these scenarios, the lower and upper bounds, as well as the interval widths, vary only modestly across copula families, indicating a degree of robustness.

\begin{table*}[htbp]
\caption{Sensitivity to Copula Family Misspecification.}
\label{tab:sensitivity}
\centering
\begin{sc} 
	\resizebox{\linewidth}{!}{%
		\begin{tabular}{ccccccccccccc}
			\toprule
			Family & {$\tau$} & {$\rho$} & {True Mean GPN} & {FH\_LB} & {FH\_UB} & {FH\_Width} & 
			\makecell{Conservative\\($\rho\in[0,1]$)\_LB} & 
			\makecell{Conservative\\($\rho\in[0,1]$)\_UB} & 
			\makecell{Conservative\\($\rho\in[0,1]$)\_Width} & {Expert\_LB} & {Expert\_UB} & {Expert\_Width} \\
			\midrule
			Gaussian & 0.20 & 0.31 & 0.206 & 0.152 & 0.617 & 0.464 & 0.185 & 0.435 & 0.250 & 0.318 & 0.395 & 0.077 \\
			Gaussian & 0.33 & 0.50 & 0.197 & 0.150 & 0.620 & 0.470 & 0.185 & 0.434 & 0.249 & 0.272 & 0.344 & 0.072 \\
			Gaussian & 0.50 & 0.71 & 0.186 & 0.134 & 0.606 & 0.472 & 0.168 & 0.416 & 0.249 & 0.207 & 0.280 & 0.073 \\
			Clayton  & 0.20 & 0.31 & 0.206 & 0.145 & 0.607 & 0.462 & 0.176 & 0.417 & 0.241 & 0.302 & 0.377 & 0.076 \\
			Clayton  & 0.33 & 0.50 & 0.197 & 0.154 & 0.621 & 0.467 & 0.187 & 0.437 & 0.250 & 0.273 & 0.346 & 0.073 \\
			Clayton  & 0.50 & 0.71 & 0.186 & 0.131 & 0.601 & 0.470 & 0.160 & 0.412 & 0.253 & 0.198 & 0.270 & 0.072 \\
			Gumbel   & 0.20 & 0.31 & 0.206 & 0.163 & 0.624 & 0.460 & 0.195 & 0.442 & 0.246 & 0.328 & 0.403 & 0.075 \\
			Gumbel   & 0.33 & 0.50 & 0.197 & 0.143 & 0.598 & 0.455 & 0.178 & 0.417 & 0.240 & 0.263 & 0.332 & 0.070 \\
			Gumbel   & 0.50 & 0.71 & 0.186 & 0.156 & 0.624 & 0.468 & 0.193 & 0.439 & 0.246 & 0.230 & 0.302 & 0.072 \\
			\bottomrule
		\end{tabular}%
	}
\end{sc}
\end{table*}

\subsection{Real Data Application}
\label{subsec:application}

\textbf{Data.}
We apply our proposed method to assess whether maternal healthy behaviors constitute a necessary risk factor for reductions in infant birth weight, from the perspective of attribution analysis. The dataset\footnote{It is publicly available at \url{http://www.stata-press.com/data/r13/cattaneo2.dta}.} contains $n = 4{,}642$ observations from the Pennsylvania Department of Health. The outcome $Y$ is the infant's birth weight (in grams). We define the treatment $Z$ as the adoption of healthy behaviors. We consider two risk factors as the treatment:
(1) \texttt{Smoking}: $Z = 1$ denotes non-smoking mothers, and $Z = 0$ denotes smoking mothers.
(2) \texttt{Alcohol}: $Z = 1$ denotes non-drinking mothers, and $Z = 0$ denotes mothers who consume alcohol.
We control for a rich set of covariates $X$, including maternal age, education, marital status, and race, the number of prenatal care visits, etc.

\textbf{Threshold Selection.} 
For GPN, we need to set two thresholds.
(a) Adverse Threshold ($c_0 = 2{,}500$ g): We set $c_0$ to 2,500 grams, corresponding to the internationally accepted clinical definition of low birth weight established by the World Health~\cite{world1992low}. Infants below this threshold are at a significantly higher risk of neonatal mortality and long-term health complications. 
(b) Healthy Threshold ($c_1 = 3{,}000$ g): We set $c_1$ to 3,000 grams, representing a clearly healthy birth weight.

Under this setup ($c_1 > c_0$), the $\textup{GPN}$ could be used to address a specific attribution question: \emph{``For a mother who engaged in healthy behavior (did not smoke or drink) and delivered a healthy baby ($Y \ge 3{,}000$g), what is the probability that the baby would have suffered from low birth weight ($Y(0) < 2{,}500$g) had she engaged in the unhealthy behavior?''}
This query investigates whether the healthy behavior was \emph{necessary} to prevent a drastic decline in health status , rather than causing a minor weight reduction.

In this real-world application, the positive dependence between potential outcomes is biologically plausible:  unobserved factors such as maternal genetics, nutritional absorption capacity, and overall physiological resilience are likely to positively influence birth weight, regardless of maternal smoking or drinking status~\cite{almond2005costs}. 
Consistent with the simulation study, we follow the same procedure to estimate the bounds on the average GPN.

\begin{figure}[htbp]
\centering
\includegraphics[width=\linewidth]{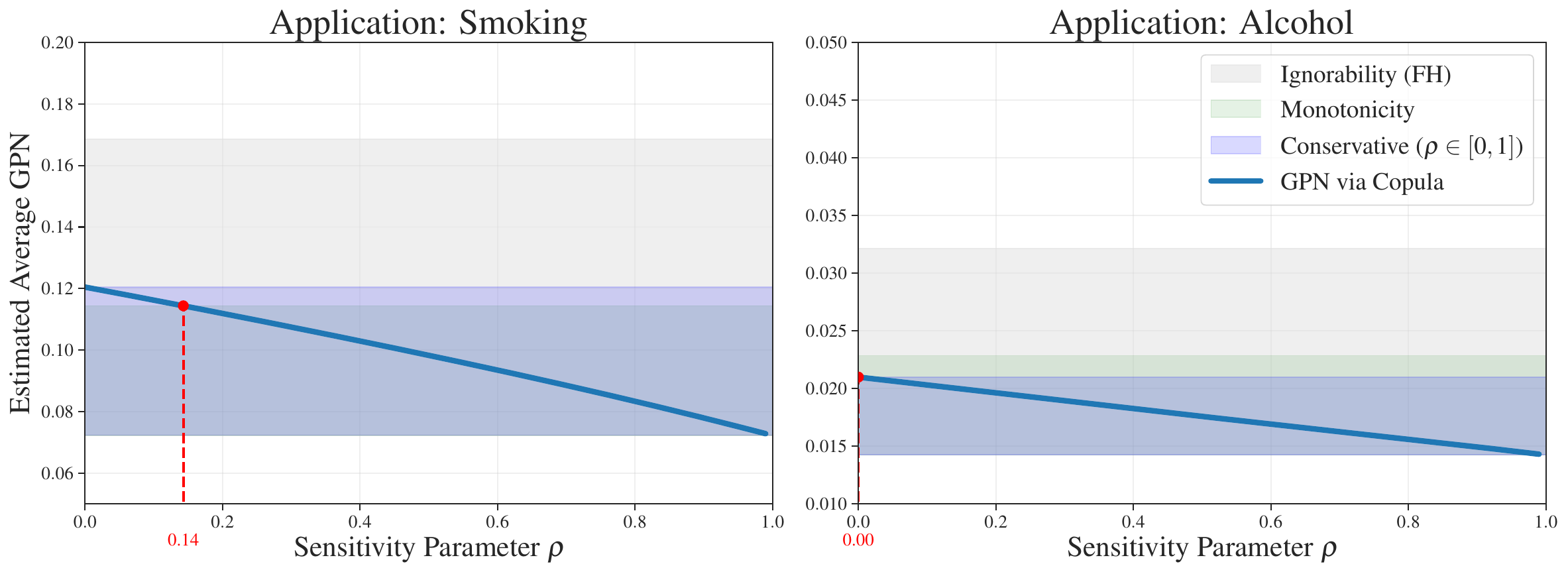}
\caption{
	Sensitivity analysis of the estimated average GPN for Smoking (left) and Alcohol (right). 
}  
\label{fig:app}
\end{figure}

\begin{table}[htbp]
\caption{Estimated bounds for the average GPN in the application.}
\label{tab:app_results}
\centering
\begin{sc} 
	\resizebox{0.85\linewidth}{!}{%
		\begin{tabular}{lcccc}
			\toprule
			\multirow{2}{*}{Method} & \multicolumn{2}{c}{Smoking} & \multicolumn{2}{c}{Alcohol} \\
			\cmidrule(lr){2-3} \cmidrule(lr){4-5}
			& Lower Bound & Upper Bound & Lower Bound & Upper Bound \\
			\midrule
			FH & $0.063 \pm 0.021$ & $0.143 \pm 0.038$ & $0.015 \pm 0.005$ & $0.045 \pm 0.011$ \\
			Monotonicity & $0.063 \pm 0.021$ & $0.088 \pm 0.026$ & $0.015 \pm 0.005$ & $0.024 \pm 0.006$ \\
			Conservative & $0.063 \pm 0.021$ & $0.097 \pm 0.029$ & $0.015 \pm 0.005$ & $0.022 \pm 0.006$ \\
			\bottomrule
		\end{tabular}%
	}
\end{sc}
\end{table}

\textbf{Results.}
The estimated bounds on GPN are visualized in Figure~\ref{fig:app}, which displays the results for two  risk factors: \texttt{Smoking} and \texttt{Alcohol}.
Similar to Figure~\ref{fig:app}, the shaded regions represent the bounds derived from the FH, Monotonicity, and Conservative methods. 
The blue line represents the sensitivity analysis with respect to $\rho$ based on the Gaussian copula. 
We mark the intersection between the curve and the tightest upper bound with a red dot. Taking \texttt{Smoking} (left panel) as an example, this figure illustrates how expert knowledge can further refine the bounds. If expert knowledge indicates that the correlation coefficient $\rho > 0.14$, we can tighten the upper bound; moreover, as long as $\rho < 1$, we can also improve the lower bound.

Table~\ref{tab:app_results} presents the numerical results for the bounds, along with their standard deviations obtained via subsampling (100 subsamples). We use subsampling rather than the ordinary bootstrap because inference for partially identified bounds involving min/max operators is nonstandard, and subsampling has well-known theoretical advantages in such irregular settings; see, for example,~\cite{romano1999}. The results reveal two key insights. First, for both \texttt{Smoking} and \texttt{Alcohol}, the lower bounds of the GPN are significantly positive. This suggests that avoiding unhealthy behaviors is a necessary cause of healthy infant birth weight for a substantial fraction of the population; that is, without such avoidance, these infants would have fallen into the low birth weight category. 
Second, compared with \texttt{Alcohol}, the bounds on the GPN for \texttt{Smoking} are larger, underscoring the severity of smoking as a risk factor. 
This indicates that avoiding smoking is critical to prevent low infant birth weight. This finding aligns with and supports for 
the public health consensus that smoking poses a greater risk to infant development and therefore warrants prioritized intervention~\cite{da2008maternal, walker2009teen}.

\section{Conclusion}
\label{conclusion}
This paper addresses a gap in attributional causal inference by developing a principled framework for the probability of necessity with continuous outcomes. We introduced the general probability of necessity (GPN), which unifies and extends existing formulations for binary and ordinal outcomes while avoiding information loss induced by discretization. Recognizing that point identification is generally impossible in this setting, we adopted a partial identification perspective and derived sharp bounds for the GPN under standard ignorability and monotonicity assumptions.
To move beyond the often overly conservative Fréchet--Hoeffding bounds, we further proposed a copula-based framework that exploits restrictions on the dependence between potential outcomes. This representation reveals that the GPN depends on a single association parameter governing the joint distribution of the potential outcomes, enabling both principled incorporation of domain knowledge and transparent sensitivity analysis. Even weak and practically plausible restrictions, such as assuming positive dependence, can substantially tighten the resulting bounds.
We hope this framework will encourage further research on dependence-aware causal attribution and on practical strategies for eliciting and validating domain knowledge about counterfactual dependence.


\bibliographystyle{plainnat}
\bibliography{references20260401}

\newpage

\begin{center}
\bf \Large 
Appendix
\end{center}

\bigskip 

\setcounter{equation}{0}
\setcounter{section}{0}
\setcounter{figure}{0}
\setcounter{example}{0}
\setcounter{proposition}{0}
\setcounter{corollary}{0}
\setcounter{theorem}{0}
\setcounter{table}{0}
\setcounter{condition}{0}
\setcounter{lemma}{0}
\setcounter{remark}{0}
\setcounter{definition}{0}

\renewcommand {\thepage} {A\arabic{page}}
\renewcommand {\theproposition} {A\arabic{proposition}}
\renewcommand {\theexample} {A\arabic{example}}
\renewcommand {\thefigure} {A\arabic{figure}}
\renewcommand {\thetable} {A\arabic{table}}
\renewcommand {\theequation} {A\arabic{equation}}
\renewcommand {\thelemma} {A\arabic{lemma}}
\renewcommand {\thesection} {A\arabic{section}}
\renewcommand {\thetheorem} {A\arabic{theorem}}
\renewcommand {\thecorollary} {A\arabic{corollary}}
\renewcommand {\thecondition} {A\arabic{condition}}
\renewcommand {\theremark} {A\arabic{remark}}
\renewcommand{\thedefinition}{A\arabic{definition}}

\setcounter{page}{1}


\section{Generalization of GPN to Interval Events}
\label{app:interval_gpn}
In Section \ref{concept}, we defined the General Probability of Necessity (GPN): 
\begin{equation*}
	\textup{GPN} = \textup{GPN}(c_0, c_1) = \mathbb{P}(Y(0) < c_0 \mid Z=1, Y \ge c_1).
	\label{eq:gpn_std}
\end{equation*}

We consider a flexible definition of GPN, denoted as $\textup{GPN}_{\textup{int}}$, defined over intervals $[\underline{c_0}, \overline{c_0})$ for $Y(0)$ and $[\underline{c_1}, \overline{c_1})$ for $Y(1)$:
\begin{equation}
	\textup{GPN}_{\textup{int}} = \mathbb{P}(\underline{c_0} \le Y(0) < \overline{c_0} \mid Z=1, \underline{c_1} \le Y(1) < \overline{c_1}).
\end{equation}

By the definition of conditional probability, we have
\begin{equation}
	\textup{GPN}_{\textup{int}} = \frac{\mathbb{P}(\underline{c_0} \le Y(0) < \overline{c_0}, \underline{c_1} \le Y(1) < \overline{c_1} \mid Z=1)}{\mathbb{P}(\underline{c_1} \le Y(1) < \overline{c_1} \mid Z=1)}.
	\label{eq:int_fraction}
\end{equation}

The denominator
in \eqref{eq:int_fraction} decomposes into
\begin{equation*}
	\mathbb{P}(\underline{c_1} \le Y(1) < \overline{c_1} \mid Z=1) = \mathbb{P}(Y(1) \ge \underline{c_1} \mid Z=1) - \mathbb{P}(Y(1) \ge \overline{c_1} \mid Z=1).
\end{equation*}

The numerator involves the joint probability of two interval events. 
Expanding the joint probability in the numerator:
\begin{align*}
	&\quad \ \mathbb{P}(\underline{c_0} \le Y(0) < \overline{c_0}, \underline{c_1} \le Y(1) < \overline{c_1} \mid Z=1) \\
	&= \mathbb{P}(\{\underline{c_0} \le Y(0) < \overline{c_0}\} \cap \{\underline{c_1} \le Y(1) < \overline{c_1}\} \mid Z=1) \nonumber \\
	&= \mathbb{P}(\{Y(0) < \overline{c_0}\} \cap \{\underline{c_1} \le Y(1) < \overline{c_1}\} \mid Z=1) \nonumber \\
	&\quad - \mathbb{P}(\{Y(0) < \underline{c_0}\} \cap \{\underline{c_1} \le Y(1) < \overline{c_1}\} \mid Z=1)\\
	&=\left[ \mathbb{P}(Y(0) < \overline{c_0}, Y(1) \ge \underline{c_1} \mid Z=1) - \mathbb{P}(Y(0) < \overline{c_0}, Y(1) \ge \overline{c_1} \mid Z=1) \right] \nonumber \\
	&\quad - \left[ \mathbb{P}(Y(0) < \underline{c_0}, Y(1) \ge \underline{c_1} \mid Z=1) - \mathbb{P}(Y(0) < \underline{c_0}, Y(1) \ge \overline{c_1} \mid Z=1) \right].
\end{align*}
Using the relation $\mathbb{P}(Y(0) < c_0, Y(1) \ge c_1 \mid Z=1) = \textup{GPN}(c_0, c_1) \cdot \mathbb{P}(Y(1) \ge c_1 \mid Z=1)$, the numerator becomes
\begin{align*}
	\mathbb{P}(\underline{c_0} \le Y(0) < \overline{c_0}, \underline{c_1} \le Y(1) < \overline{c_1} \mid Z=1) &= \textup{GPN}(\overline{c_0}, \underline{c_1}) \cdot \mathbb{P}(Y(1) \ge \underline{c_1} \mid Z=1) \nonumber \\
	&\quad - \textup{GPN}(\overline{c_0}, \overline{c_1}) \cdot \mathbb{P}(Y(1) \ge \overline{c_1} \mid Z=1) \nonumber \\
	&\quad - \textup{GPN}(\underline{c_0}, \underline{c_1}) \cdot \mathbb{P}(Y(1) \ge \underline{c_1} \mid Z=1) \nonumber \\
	&\quad + \textup{GPN}(\underline{c_0}, \overline{c_1}) \cdot \mathbb{P}(Y(1) \ge \overline{c_1} \mid Z=1).
\end{align*}

Substituting these into \eqref{eq:int_fraction} yields the linear combination:
\begin{align}
	\textup{GPN}_{\textup{int}} &= w_1 \cdot \left[ \textup{GPN}(\overline{c_0}, \underline{c_1}) - \textup{GPN}(\underline{c_0}, \underline{c_1}) \right] \nonumber \\
	&\quad - w_2 \cdot \left[ \textup{GPN}(\overline{c_0}, \overline{c_1}) - \textup{GPN}(\underline{c_0}, \overline{c_1}) \right],
\end{align}
where the weights are given by
\begin{align}
	w_1 &= \frac{\mathbb{P}(Y(1) \ge \underline{c_1} \mid Z=1)}{\mathbb{P}(Y(1) \ge \underline{c_1} \mid Z=1) - \mathbb{P}(Y(1) \ge \overline{c_1} \mid Z=1)}, \\
	w_2 &= \frac{\mathbb{P}(Y(1) \ge \overline{c_1} \mid Z=1)}{\mathbb{P}(Y(1) \ge \underline{c_1} \mid Z=1) - \mathbb{P}(Y(1) \ge \overline{c_1} \mid Z=1)}.
\end{align}
It follows that $w_1 - w_2 = 1$. 
This derivation confirms that any interval-based probability of necessity can be calculated as a linear combination of the GPNs, with weights determined by the identifiable marginal distribution of $Y(1)$.

\section{Proof of Lemma~\ref{thm:fh_bounds}}
\label{appx1}
\begin{lemma}[Fréchet--Hoeffding Bounds on GPN]
	Under Assumption \ref{asp1}, $\textup{GPN}(x)$ lies within the interval $[L_{\textup{FH}}(x), U_{\textup{FH}}(x)]$, where 
	\begin{align*}
		& L_{\textup{FH}}(x)={} \max
		\left(0,  1 -\frac{ \mathbb{P}(Y \ge c_0 \mid X=x, Z=0)}{\mathbb{P}(Y \ge c_1 \mid X=x, Z=1)}
		\right), \\
		&U_{\textup{FH}}(x) = \min\left(1, \frac{\mathbb{P}(Y < c_0 \mid X=x, Z=0)}{\mathbb{P}(Y \ge c_1 \mid X=x, Z=1)}\right).
	\end{align*}
	These bounds are sharp. 
\end{lemma}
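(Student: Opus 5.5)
We reduce everything to the joint law of the two binary events $A=\{Y(1)\ge c_1\}$ and $B=\{Y(0)<c_0\}$ given $X=x$, and apply the classical Fréchet--Hoeffding inequalities to $\mathbb{P}(A\cap B\mid X=x)$.

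\textbf{Reducing to the covariate-conditional distribution.} By Assumption~\ref{asp1}(a), $(Y(1),Y(0))\indep Z\mid X$, so
\[
\mathbb{P}(Y(0)<c_0, Y(1)\ge c_1\mid X=x,Z=1)=\mathbb{P}(A\cap B\mid X=x).
\]
Likewise, by SUTVA, $\mathbb{P}(Y\ge c_1\mid X=x,Z=1)=\mathbb{P}(A\mid X=x)=:p_1$ and $\mathbb{P}(Y<c_0\mid X=x,Z=0)=\mathbb{P}(B\mid X=x)=:q_0$. Overlap (Assumption~\ref{asp1}(b)) guarantees that these conditional probabilities are well defined. Hence, by \eqref{eq:gpn_target}, $\textup{GPN}(x)=\mathbb{P}(A\cap B\mid x)/p_1$. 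We assume $p_1>0$, since otherwise the estimand is undefined.

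\textbf{Validity.} For any two events,
\[
\max(0,\,p_1+q_0-1)\le \mathbb{P}(A\cap B\mid x)\le \min(p_1,q_0).
\]
Dividing by $p_1$ gives the upper bound $\min(1,q_0/p_1)=U_{\textup{FH}}(x)$ and the lower bound $\max\bigl(0,\,1-(1-q_0)/p_1\bigr)$. Since $1-q_0=\mathbb{P}(Y\ge c_0\mid X=x,Z=0)$, this lower bound equals $L_{\textup{FH}}(x)$.

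\textbf{Sharpness.} This is the only step requiring care. We must show that every value in $[L_{\textup{FH}}(x),U_{\textup{FH}}(x)]$ is attained by some joint law of $(Y(1),Y(0))$ given $X=x$ that has the identified marginals $F_{Y(1)|X}(\cdot|x)$ and $F_{Y(0)|X}(\cdot|x)$ and is compatible with Assumption~\ref{asp1}. We would use the standard quantile couplings. Take $U\sim\mathrm{Unif}(0,1)$ independent of $Z$ given $X$, and set $Y(1)=F_{Y(1)|X}^{-1}(U\mid x)$.

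\begin{itemize}
\item For the comonotone coupling, set $Y(0)=F_{Y(0)|X}^{-1}(U\mid x)$. Then $A=\{U>1-p_1\}$ up to null sets and $B=\{U<q_0\}$, so $\mathbb{P}(A\cap B)=\max(0,q_0-(1-p_1))$, which attains the lower bound.
\item For the countermonotone coupling, set $Y(0)=F_{Y(0)|X}^{-1}(1-U\mid x)$. Then $B=\{1-U<q_0\}=\{U>1-q_0\}$, so $\mathbb{P}(A\cap B)=\min(p_1,q_0)$, which attains the upper bound.
\end{itemize}

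Care is needed with ties and atoms of the marginals: the events $\{Y(1)\ge c_1\}$ and $\{U>1-p_1\}$ agree only up to boundary points. We would handle this by using the generalized inverse and noting that $\{F^{-1}(U)\ge c\}=\{U>F(c^-)\}$.

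Finally, intermediate values follow by taking the mixture $\lambda P_{\text{upper}}+(1-\lambda)P_{\text{lower}}$ of the two joint laws. The mixture preserves both marginals, and $\mathbb{P}(A\cap B)$ is linear in $\lambda$, so the identified set is the full closed interval. Because these constructions are carried out conditionally on $X=x$ and independently of $Z$, they satisfy unconfoundedness and reproduce the observed distribution, which establishes sharpness.
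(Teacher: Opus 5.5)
Your proof is correct and follows the paper's route: use unconfoundedness to drop the conditioning on $Z=1$, apply the two-event Fréchet--Hoeffding inequality to $\{Y(1)\ge c_1\}$ and $\{Y(0)<c_0\}$ given $X=x$, and divide by $\mathbb{P}(Y\ge c_1\mid X=x,Z=1)$. The only difference is in sharpness: the paper just asserts it from the pointwise sharpness of the Fréchet--Hoeffding bounds, whereas your comonotone and countermonotone quantile couplings, together with the mixture argument, actually verify it and show that every value in the interval is attained.
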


\begin{proof}[Proof of Lemma 1]
	The $\textup{GPN}(x)$ is given by
	\begin{equation*}
		\textup{GPN}(x) = \frac{\mathbb{P}(Y(0) < c_0, Y(1) \ge c_1 \mid X=x)}{\mathbb{P}(Y(1) \ge c_1 \mid X=x)}.
	\end{equation*}
	Let $A = \{Y(0) < c_0\}$ and $B = \{Y(1) \ge c_1\}$. 
	By the Fréchet--Hoeffding inequality~\cite{Frechet1960}, the joint probability is bounded by
	\begin{equation}
		\max(0, \mathbb{P}(A \mid X=x) + \mathbb{P}(B \mid X=x) - 1) \le \mathbb{P}(A \cap B \mid X=x) \le \min(\mathbb{P}(A \mid X=x), \mathbb{P}(B \mid X=x)).
	\end{equation}
	Dividing by the denominator $\mathbb{P}(B \mid X=x)$ yields the upper bound:
	\begin{align*}
		U_{\textup{FH}}(x) &= \frac{\min(\mathbb{P}(A \mid X=x), \mathbb{P}(B \mid X=x))}{\mathbb{P}(B \mid X=x)} = \min\left(1, \frac{\mathbb{P}(A \mid X=x)}{\mathbb{P}(B \mid X=x)}\right) \\
		&= \min\left(1, \frac{\mathbb{P}(Y < c_0 \mid X=x, Z=0)}{\mathbb{P}(Y \ge c_1 \mid X=x, Z=1)}\right).
	\end{align*}
	Similarly, the lower bound is derived as
	\begin{align*}
		L_{\textup{FH}}(x) &= \frac{\max(0, \mathbb{P}(A \mid X=x) + \mathbb{P}(B \mid X=x) - 1)}{\mathbb{P}(B \mid x)} = \max\left(0, 1 - \frac{1 - \mathbb{P}(A \mid X=x)}{\mathbb{P}(B \mid X=x)}\right) \\
		&= \max\left(0, 1 - \frac{\mathbb{P}(Y \ge c_0 \mid X=x, Z=0)}{\mathbb{P}(Y \ge c_1 \mid X=x, Z=1)}\right).
	\end{align*}
	Sharpness holds because the Fréchet--Hoeffding bounds are pointwise sharp for any pair of marginal distributions.
	
	This completes the proof.
\end{proof}

\section{Proof of Proposition~\ref{lemma:mono_bounds}}
\label{appx2}
\begin{proposition}
	[Sharp Bounds under Monotonicity]
	Under Assumptions \ref{asp1} and \ref{asp2}, 
	$\textup{GPN}(x)$ lies within the interval $[L_{\textup{mono}}(x), U_{\textup{mono}}(x)]$, where 
	\begin{align*}
		&L_{\textup{mono}}(x) = L_{\textup{FH}}(x), \\
		&U_{\textup{mono}}(x)= \min \left( U_{\textup{FH}}(x),    1-\frac{\mathbb{P}(Y \ge c_1 \mid X=x,Z=0)}{\mathbb{P}(Y \ge c_1 \mid X=x, Z=1)}\right ). 
	\end{align*}
	These bounds are sharp.
\end{proposition}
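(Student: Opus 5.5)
Throughout, fix $x$ and write $p_A=\mathbb{P}(Y(0)<c_0\mid X=x)$, $p_B=\mathbb{P}(Y(1)\ge c_1\mid X=x)$ and $N(x)=\mathbb{P}(Y(0)<c_0,Y(1)\ge c_1\mid X=x)$, so that $\textup{GPN}(x)=N(x)/p_B$. By Assumption~\ref{asp1}, every marginal quantity below equals the corresponding observed conditional probability given $Z=0$ or $Z=1$. The proof then has two parts. First I show the bounds are valid: $N(x)$ is bounded above and below by the corresponding quantities under any monotone joint law. Second I show they are sharp: I construct monotone couplings of the identified marginals $F_{Y(0)|X}$ and $F_{Y(1)|X}$ that attain each bound. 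I use $c_1\ge c_0$ throughout, as the paper does.

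\emph{Validity.} The lower bound $L_{\textup{FH}}(x)$ is valid because Lemma~\ref{thm:fh_bounds} holds for every joint law, monotone or not. For the upper bound, Lemma~\ref{thm:fh_bounds} already gives $N(x)\le \min(p_A,p_B)$, so it remains to add one inequality. Since $c_0\le c_1$, we have $\{Y(0)<c_0\}\subseteq\{Y(0)<c_1\}$, and therefore
\[
N(x)\le \mathbb{P}(Y(0)<c_1,\,Y(1)\ge c_1\mid x)=p_B-\mathbb{P}(Y(0)\ge c_1,\,Y(1)\ge c_1\mid x).
\]
Under Assumption~\ref{asp2}, $Y(0)\ge c_1$ implies $Y(1)\ge c_1$ almost surely. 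Hence the last joint probability equals $\mathbb{P}(Y(0)\ge c_1\mid x)$, which is identified as $\mathbb{P}(Y\ge c_1\mid X=x,Z=0)$. Dividing by $p_B$ gives the extra term in $U_{\textup{mono}}(x)$.

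\emph{Sharpness of the lower bound.} Assumption~\ref{asp2} forces $F_{Y(1)|X}\le F_{Y(0)|X}$ pointwise, i.e.\ first-order stochastic dominance. I therefore take the comonotone coupling $Y(0)=F_{Y(0)|X}^{-1}(U)$, $Y(1)=F_{Y(1)|X}^{-1}(U)$ with $U\sim\mathrm{Unif}(0,1)$. This coupling satisfies $Y(1)\ge Y(0)$ precisely because of the dominance. For it,
\[
N(x)=\mathbb{P}\bigl(F_{Y(1)|X}(c_1^-)\le U<F_{Y(0)|X}(c_0^-)\bigr)=\max(0,\,p_B-\mathbb{P}(Y(0)\ge c_0\mid x)),
\]
which is exactly the Fréchet--Hoeffding lower bound.

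\emph{Sharpness of the upper bound.} This is the step I expect to be the main obstacle. In quantile terms, set $a=p_A$, $q=\mathbb{P}(Y(0)<c_1\mid x)$ and $r=\mathbb{P}(Y(1)<c_1\mid x)$; dominance gives $r\le q$, and the target is $N=\min(a,\,1-r,\,q-r)$. I plan a block-wise construction. The $Y(0)$-mass on $[c_1,\infty)$, of size $1-q$, must be sent into $\{Y(1)\ge c_1\}$, and I send it comonotonically onto the top $1-q$ quantiles of $Y(1)$. This leaves $Y(1)$-mass $q-r$ in $[c_1,\infty)$ still free. I then transport the lowest $\min(a,q-r)$ of the $Y(0)$-quantiles, all of which lie below $c_0$, onto that free block. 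Every such pair has $Y(0)<c_0\le c_1\le Y(1)$, so monotonicity holds automatically on it. The remaining $Y(0)$-mass, lying in $[\cdot,c_1)$, is coupled comonotonically with the remaining $Y(1)$-mass below $c_1$. The delicate point is to check that this residual pair still satisfies stochastic dominance, so that the comonotone coupling of the remainders respects $Y(1)\ge Y(0)$. I intend to verify this using $r\le q$ together with the fact that I removed the lowest $Y(0)$-quantiles. If that check fails, my fallback is to remove instead the top $\min(a,q-r)$ of the $Y(0)$-mass below $c_0$. Either way, the construction makes the coupled probability equal $\min(a,q-r)$. Since $q-r\le 1-r$, this equals the full minimum $\min(a,1-r,q-r)$, and dividing by $p_B$ yields $U_{\textup{mono}}(x)$.
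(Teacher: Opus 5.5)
Your validity argument is correct. Your comonotone coupling is a genuine proof that $L_{\textup{FH}}(x)$ is attained under monotonicity, and it is more careful than the paper's remark that $A\cap B$ is ``consistent with'' monotonicity.

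The gap is the sharpness of the upper bound. It sits exactly at the residual-dominance check you flag, and it cannot be repaired, because the stated $U_{\textup{mono}}(x)$ is not attained in general.

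\textbf{A missing constraint.} Under $Y(1)\ge Y(0)$, for every $t\in[c_0,c_1]$ you have $\{Y(0)<c_0,\,Y(1)\ge c_1\}\subseteq\{Y(0)<t\le Y(1)\}$. You also have $\mathbb{P}(Y(0)<t\le Y(1)\mid x)=\mathbb{P}(Y(0)<t\mid x)-\mathbb{P}(Y(1)<t\mid x)$. Hence $N(x)\le \mathbb{P}(Y(0)<t\mid x)-\mathbb{P}(Y(1)<t\mid x)$ for every such $t$. Your $q-r$ is the case $t=c_1$. The case $t=c_0$ gives $N(x)\le a-\mathbb{P}(Y(1)<c_0\mid x)$, which the statement omits. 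Put differently, $Y(1)$-mass below $c_0$ can only be matched with $Y(0)$-mass below $c_0$, so not all of the latter can be sent to $\{Y(1)\ge c_1\}$.

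\textbf{Why your check fails.} With $m=\min(a,q-r)$, the residual CDF comparison on $[c_0,c_1)$ is precisely $m\le \mathbb{P}(Y(0)<t\mid x)-\mathbb{P}(Y(1)<t\mid x)$, and nothing guarantees this.

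\textbf{Counterexample.} Take $c_0=1$ and $c_1=2$. Let $Y(0)\mid x$ be the mixture $\frac12\mathrm{U}(0,1)+\frac12\mathrm{U}(1,1.5)$, and let $Y(1)\mid x$ be $0.4\,\mathrm{U}(0,1)+0.6\,\mathrm{U}(3,4)$. Then $F_{Y(1)|X}\le F_{Y(0)|X}$, $a=0.5$, $q-r=0.6$, $p_B=0.6$ and $\mathbb{P}(Y(0)\ge c_1\mid x)=0$. So the stated bound is $U_{\textup{mono}}(x)=5/6$. But every monotone coupling has $N(x)\le 0.5-0.4=0.1$, that is, $\textup{GPN}(x)\le 1/6$. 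Here $m=a$, so your primary and fallback constructions coincide. Both use up all $Y(0)$-mass below $c_0$, leaving the $Y(1)$-mass on $(0,1)$ with no admissible partner.

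\textbf{What this means.} The proposition's sharpness claim for the upper bound is false as stated. The paper's own proof has the same hole: it asserts that $\min(\mathbb{P}(A),\mathbb{P}(B)-\mathbb{P}(C))$ is sharp without exhibiting a coupling.

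\textbf{What your construction does prove.} Replace $m$ by $m^\ast=\min_{t\in[c_0,c_1]}\{\mathbb{P}(Y(0)<t\mid x)-\mathbb{P}(Y(1)<t\mid x)\}$ in your fallback construction. Remove the top $m^\ast$ of the $Y(0)$-mass below $c_0$ and the bottom $m^\ast$ of the $Y(1)$-mass above $c_1$, and pair them. The residual CDFs are then ordered on each region $t<c_0$, $c_0\le t<c_1$ and $t\ge c_1$, using respectively $m^\ast\le a-\mathbb{P}(Y(1)<c_0\mid x)$, the definition of $m^\ast$, and $m^\ast\le q-r$. So the comonotone coupling of the residuals is monotone, and the sharp upper bound is $m^\ast/p_B$. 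This equals the stated $U_{\textup{mono}}(x)$ when the minimum is attained at $t=c_1$ (in particular when $c_0=c_1$). It can be strictly smaller, as in the example above.
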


\begin{proof}[Proof of Proposition 1]
	We suppress the conditioning on $X=x$ for notational brevity. Let $A = \{Y(0) < c_0\}$ and $B = \{Y(1) \ge c_1\}$. The numerator of the GPN is $\mathbb{P}(A \cap B)$. Define the event $C = \{Y(0) \ge c_1\}$.

	Under Assumption \ref{asp2} ($Y(1) \ge Y(0)$), the event $C$ implies $Y(1) \ge c_1$, so $C \subseteq B$. Additionally, since $c_1 > c_0$, $C$ implies $Y(0) \ge c_0$, so $C \subseteq A^c$. Thus, $C \subseteq B \cap A^c = B \setminus A$.

	Using the decomposition $\mathbb{P}(A \cap B) = \mathbb{P}(B) - \mathbb{P}(B \setminus A)$, the inclusion $C \subseteq B \setminus A$ implies
	\begin{equation}
		\mathbb{P}(A \cap B) \le \mathbb{P}(B) - \mathbb{P}(C).
	\end{equation}
	Combining this with the Fréchet--Hoeffding upper bound $\mathbb{P}(A)$, the sharp upper bound for the numerator is $\min(\mathbb{P}(A), \mathbb{P}(B) - \mathbb{P}(C))$. Dividing by the denominator $\mathbb{P}(B)$ yields
	\begin{align*}
		U_{\textup{mono}}(x) &= \min\left( \frac{\mathbb{P}(A)}{\mathbb{P}(B)}, \frac{\mathbb{P}(B) - \mathbb{P}(C)}{\mathbb{P}(B)} \right) \\
		&= \min\left( U_{\textup{FH}}(x), 1 - \frac{\mathbb{P}(Y \ge c_1 \mid Z=0)}{\mathbb{P}(Y \ge c_1 \mid Z=1)} \right).
	\end{align*}
	For the lower bound, the event $A \cap B$ corresponds to $Y(0) < c_0 < c_1 \le Y(1)$, which is consistent with the monotonicity assumption $Y(1) \ge Y(0)$. Thus, monotonicity imposes no additional constraints on the lower bound beyond the marginal distributions, yielding $L_{\textup{mono}}(x) = L_{\textup{FH}}(x)$.
	
	This completes the proof.
\end{proof}

\section{Proof of Corollary~\ref{lemma:mono_bounds_same}}
\label{appx3}
\begin{corollary}[Identification under Monotonicity]
	Under Assumptions \ref{asp1} and \ref{asp2}, if $c_0 = c_1 = c$, $\textup{GPN}(x)$ is point-identifiable, and the identifiability formula is given by 
	\begin{equation*}
		\textup{GPN}(x) = 1 - \frac{\mathbb{P}(Y(0) \ge c \mid X=x)}{\mathbb{P}(Y(1) \ge c \mid X=x)}.
	\end{equation*}
\end{corollary}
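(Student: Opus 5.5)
The plan is a direct argument rather than taking a limit in Proposition~\ref{lemma:mono_bounds}. The proof of Proposition~\ref{lemma:mono_bounds} used $c_1>c_0$ strictly, so it does not cover this boundary case, and a direct argument is cleaner. Throughout, suppress conditioning on $X=x$. Under Assumption~\ref{asp1}(a), conditioning on $Z=1$ can be dropped for events involving only potential outcomes. So
\[
\textup{GPN}(x)=\frac{\mathbb{P}(Y(0)<c,\,Y(1)\ge c)}{\mathbb{P}(Y(1)\ge c)}.
\]
The denominator equals $\mathbb{P}(Y\ge c\mid X=x,Z=1)>0$, which is identified by overlap (Assumption~\ref{asp1}(b)). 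All the work is therefore in the numerator.

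\textbf{Step 1 (event decomposition).} Let $B=\{Y(1)\ge c\}$ and $C=\{Y(0)\ge c\}$. Since $B$ is the disjoint union of $B\cap C^c=\{Y(0)<c,\,Y(1)\ge c\}$ and $B\cap C$, we have
\[
\mathbb{P}(Y(0)<c,\,Y(1)\ge c)=\mathbb{P}(B)-\mathbb{P}(B\cap C).
\]

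\textbf{Step 2 (use monotonicity).} Assumption~\ref{asp2} gives $Y(1)\ge Y(0)$ almost surely. On $C$ we have $Y(0)\ge c$, hence $Y(1)\ge c$, so $C\subseteq B$ up to a null set. Therefore $\mathbb{P}(B\cap C)=\mathbb{P}(C)$, and the numerator equals $\mathbb{P}(Y(1)\ge c)-\mathbb{P}(Y(0)\ge c)$.

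\textbf{Step 3 (identification).} Dividing by $\mathbb{P}(Y(1)\ge c)$ gives the stated formula
\[
\textup{GPN}(x)=1-\frac{\mathbb{P}(Y(0)\ge c\mid X=x)}{\mathbb{P}(Y(1)\ge c\mid X=x)}.
\]
Both terms are identified under Assumption~\ref{asp1}: $\mathbb{P}(Y(z)\ge c\mid X=x)=\mathbb{P}(Y\ge c\mid X=x,Z=z)$ by unconfoundedness and SUTVA, with overlap ensuring both conditioning events have positive probability.

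\textbf{Cross-check and main obstacle.} As a consistency check, the result should agree with Proposition~\ref{lemma:mono_bounds} at $c_0=c_1=c$. There $L_{\textup{FH}}=\max(0,\,1-\mathbb{P}(C)/\mathbb{P}(B))$. Monotonicity implies $\mathbb{P}(C)\le\mathbb{P}(B)$, so this equals $1-\mathbb{P}(C)/\mathbb{P}(B)$. This is the same as the second term inside $U_{\textup{mono}}$, which is at most $U_{\textup{FH}}$, so the bounds collapse to a point. There is no real obstacle. The only care needed is noting that $C\subseteq B$ holds almost surely, not pointwise, and that the denominator is positive; if it were zero, $\textup{GPN}(x)$ would be undefined.
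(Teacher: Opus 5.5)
Your proof is correct and follows essentially the same argument as the paper. The paper also sets $E_1=\{Y(1)\ge c\}$ and $E_0=\{Y(0)\ge c\}$, uses monotonicity to get $E_0\subseteq E_1$, writes the numerator as $\mathbb{P}(E_1)-\mathbb{P}(E_0)$, and divides by $\mathbb{P}(E_1)$. Your extra remarks (almost-sure inclusion, identification of the marginals, and the collapse of the bounds in Proposition~\ref{lemma:mono_bounds}) are fine, with two small corrections: $\mathbb{P}(Y\ge c\mid X=x,Z=1)>0$ is needed for $\textup{GPN}(x)$ to be well defined rather than being implied by overlap, and the proof of Proposition~\ref{lemma:mono_bounds} actually goes through unchanged for $c_1\ge c_0$.
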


\begin{proof}[Proof of Corollary 1]
	Define the events $E_1 = \{Y(1) \ge c\}$ and $E_0 = \{Y(0) \ge c\}$. The numerator of $\textup{GPN}(x)$ corresponds to the joint probability $\mathbb{P}(E_1 \cap E_0^c \mid X=x)$.

	Under Assumption \ref{asp2} ($Y(1) \ge Y(0)$), observing $Y(0) \ge c$ implies $Y(1) \ge c$. Thus, $E_0 \subseteq E_1$.
	Consequently, the joint probability simplifies to
	\begin{equation}
		\mathbb{P}(E_1 \cap E_0^c \mid X=x) = \mathbb{P}(E_1 \setminus E_0 \mid X=x) = \mathbb{P}(E_1 \mid X=x) - \mathbb{P}(E_0 \mid X=x).
	\end{equation}
	Dividing by the denominator $\mathbb{P}(E_1 \mid X=x)$ yields
	\begin{equation*}
		\textup{GPN}(x) = \frac{\mathbb{P}(E_1 \mid X=x) - \mathbb{P}(E_0 \mid X=x)}{\mathbb{P}(E_1 \mid X=x)} = 1 - \frac{\mathbb{P}(Y(0) \ge c \mid X=x)}{\mathbb{P}(Y(1) \ge c \mid X=x)}.
	\end{equation*}
	
	This completes the proof.
\end{proof}

\section{Consistency with Fréchet--Hoeffding Bounds}
\label{app:consistency}
We show that the bounds derived from the Gaussian copula with $\rho \in [-1, 1]$ coincide with the Fréchet--Hoeffding bounds.

Recall the copula representation of the $\textup{GPN}(x; \rho)$in Proposition~\ref{prop:gpn_copula}:
\begin{equation*}
	\textup{GPN}(x; \rho) = \frac{u_0(x) - C_\rho(u_1(x), u_0(x))}{1 - u_1(x)},
\end{equation*}
where $u_1(x) = \mathbb{P}(Y(1) \le c_1 \mid X=x)$, $u_0(x) = \mathbb{P}(Y(0) \le c_0 \mid X=x)$, and $C_\rho$ is the copula function governing the joint distribution.

The Gaussian copula $C_\rho(u, v)$ satisfies the limit properties:
\begin{align}
	\lim_{\rho \to 1} C_\rho(u, v) &= \min(u, v), \label{eq:pos_lim}\\
	\lim_{\rho \to -1} C_\rho(u, v) &= \max(u + v - 1, 0). \label{eq:neg_lim}
\end{align}

The lower bound of $\textup{GPN}(x; \rho)$ is attained when $C_\rho$ is maximized ($\rho \to 1$). Substituting \eqref{eq:pos_lim}
\begin{equation*}
	L_{\textup{copula}}(x) = \frac{u_0(x) - \min(u_1(x), u_0(x))}{1 - u_1(x)}.
\end{equation*}

If $u_1(x) \ge u_0(x)$, the numerator equals 0; if $u_1(x) < u_0(x)$, it equals $u_0(x) - u_1(x)$. Thus
\begin{equation*}
	L_{\textup{copula}}(x) = \max\left(0, \frac{u_0(x) - u_1(x)}{1 - u_1(x)}\right).
\end{equation*}

This is identical to the Fréchet--Hoeffding lower bound $L_{\textup{FH}} = \max
\left(0,  1 -\frac{ \mathbb{P}(Y \ge c_0 \mid X=x, Z=0)}{\mathbb{P}(Y \ge c_1 \mid X=x, Z=1)}
\right)$ derived in Lemma \ref{thm:fh_bounds}.

The upper bound is attained when $C_\rho$ is minimized ($\rho \to -1$). Substituting \eqref{eq:neg_lim}
\begin{equation*}
	U_{\textup{copula}}(x) = \frac{u_0(x) - \max(u_1(x) + u_0(x) - 1, 0)}{1 - u_1(x)}.
\end{equation*}

If $u_1(x) + u_0(x) - 1 < 0$, the expression simplifies to $\frac{u_0(x)}{1 - u_1(x)}$. Otherwise, the numerator becomes $u_0(x) - (u_1(x) + u_0(x) - 1) = 1 - u_1(x)$, yielding a value of 1. Thus
\begin{equation}
	U_{\textup{copula}}(x) = \min\left(1, \frac{u_0(x)}{1 - u_1(x)}\right).
\end{equation}
This is identical to the Fréchet--Hoeffding upper bound $U_{\textup{FH}} = \min\left(1, \frac{\mathbb{P}(Y < c_0 \mid X=x, Z=0)}{\mathbb{P}(Y \ge c_1 \mid X=x, Z=1)}\right)$ derived in Lemma \ref{thm:fh_bounds}.

The equivalence establishes that the proposed copula framework generalizes the classical Fréchet--Hoeffding bounds. By restricting the association parameter (e.g., $\rho \in [0, 0.5]$) based on domain knowledge, the framework strictly tightens the identification interval compared to the worst-case bounds.

\end{document}